\documentclass{llncs}

\sloppy

%%% Switch things on or off
\newcommand\new{\color{red}}

\newcommand\remove[1]{}

\usepackage{amsmath, amssymb, amsfonts}
\usepackage{paralist}
\usepackage{alltt}
\usepackage{wrapfig}
\usepackage{stmaryrd}

\usepackage{caption}
\usepackage{subcaption}
\captionsetup{compatibility=false}
\usepackage{floatrow}
\usepackage{calc}% To calculate width for \FBwidth
% \floatsetup{
  % heightadjust=object,
  % valign=t
% }
% \floatsetup[figure]{style=ruled}
% \floatsetup[subfigure]{style=plain}

\usepackage{algorithm}
\usepackage[noend]{algorithmic}

\usepackage{tikz}
\usetikzlibrary{matrix}
\usetikzlibrary{fit}
\usetikzlibrary{automata}
\usetikzlibrary{shapes}

%thorsten
\usepackage{pbox}
\usepackage{mathtools}
\usepackage{footmisc}
\usepackage{xspace}
\usepackage{color}

\newcommand\comment[1]{}

% Make floats and equations behave sensibly
\setlength{\intextsep}{1pt}
\setlength{\textfloatsep}{1pt}
\setlength{\floatsep}{1pt}
\abovedisplayskip=6pt plus 3pt minus 9pt
\abovedisplayshortskip=0pt plus 3pt
\belowdisplayskip=6pt plus 3pt minus 9pt
\belowdisplayshortskip=7pt plus 3pt minus 4pt
\setlength\abovecaptionskip{1pt}
\setlength\belowcaptionskip{1pt}

\newcommand\Vars{\ensuremath{\mathcal{V}}}

\newcommand\Prog{\ensuremath{\mathit{P}}}
\newcommand\PProg{\ensuremath{\mathcal{P}}}
\newcommand\SemPreGen[1]{\ensuremath{\mathit{SemPreGen}(#1)}}

\newcommand\ProgCons{\ensuremath{\Phi}}
\newcommand\ProgState{\ensuremath{S}}

\newcommand\Valuation{\ensuremath{\mathcal{D}}}
\newcommand\errVar{\ensuremath{\mathit{err}}}

\newcommand\trace{\pi}

\newcommand\LearnGood{\mathit{LearnGood}}
\newcommand\LearnGoodUnder{\mathit{LearnGoodUnder}}
\newcommand\FixBad{\mathit{FixBad}}

\newcommand{\redn}{\mathit{DFAsserts}}
\newcommand{\blackn}{\mathit{IntraThreadOrder}}
\newcommand{\bluen}{\mathit{DFConds}}
\newcommand{\pinkn}{\mathit{NonFreeOrder}}
\newcommand{\readn}{\mathit{Read}}
\newcommand{\writen}{\mathit{write}}

\newcommand{\last}{\ensuremath{\mathit{last}}\xspace}
\newcommand{\depends}{\ensuremath{\mathit{depends}}\xspace}

\newcommand{\interfere}{\mathit{interfere}}

\newcommand\switch[2]{#1 \ensuremath{\leftrightsquigarrow} #2}
\newcommand\reorder{\theta}

\newcommand\arsays[1]{{\bf AR: #1}}

\newcommand\mypara[1]{\noindent{\bf #1}}

\begin{document}
\title{Regression-free Synthesis for Concurrency\thanks{This 
    research was funded in part by the European Research Council (ERC)
    under grant agreement 267989 (QUAREM), by the Austrian Science
    Fund (FWF) project S11402-N23 (RiSE), and by a gift from Intel
    Corporation. NICTA is funded by the Australian Government through
    the Department of Communications and the Australian Research Council
through the ICT Centre of Excellence Program.}} 
\author{Pavol {\v C}ern{\'y}\inst{1} \and Thomas A. Henzinger\inst{2}
\and Arjun Radhakrishna\inst{2} \and Leonid
Ryzhyk\inst{3}\inst{4}\and Thorsten Tarrach\inst{2}}
\institute{University of Colorado Boulder \and IST Austria 
\and University of Toronto \and NICTA, Sydney, Australia$^*$}
\maketitle

% \arsays{TODO:
  % \begin{itemize}
    % \item~Fix all the references to red/blue/black/pink edges. The paper
      % should be readable in colourless printer
    % \item~Remove all references to the complexity lemma
    % \item~Add the trace transformations into blue edges and proofs
    % \item~change x=1 to either x:=1 or x==1
  % \end{itemize}
% }
\begin{abstract}
While fixing concurrency bugs, program repair algorithms may introduce new
concurrency bugs.
We present an algorithm that avoids such regressions.
The solution space is given by a set of program transformations we consider
in the repair process. 
These include   
reordering of instructions within a thread and inserting atomic sections. 
%
%The new algorithm, PACES, is an extension of the CEGIS loop. 
%
The new algorithm learns a constraint on the space of candidate solutions, 
from both positive examples (error-free traces) and counterexamples (error
traces).    
From each counterexample, the algorithm learns a constraint necessary to
remove the errors.
From each positive examples, it learns a constraint that is
necessary in order to prevent the repair from turning the trace into an
error trace. 
We implemented the algorithm and evaluated it on simplified Linux
device drivers with known bugs.  
%
%The tool is able to fix the bugs while avoiding regressions.
\end{abstract}

%\ttsays{We should adjust how we talk about PACES in the abstract}

\section{Introduction}
\label{sec:intro}

The goal of program synthesis is to simplify the programming task by 
letting the programmer specify (parts of) her intent declaratively.
{\em Program repair} is the instance of synthesis where we are given both a 
program and a specification. 
The specification classifies the execution of the program into {\em good
traces} and {\em bad traces}. 
The synthesis task is to automatically modify the program so that the 
bad traces are removed, while (many of) the good traces are preserved. 

In {\em program repair for concurrency}, we assume that all errors 
are caused by concurrent execution.
We formalize this assumption into a requirement that all preemption-free
traces are good. 
The program may contain concurrency errors that are triggered by more
aggressive, preemptive scheduling. 
Such errors are notoriously difficult to detect 
% without a formal analysis 
and, in extreme cases, may only show up after years of operation 
of the system. 
Program repair for concurrency allows the programmer to focus on the
preemption-free correctness, while putting 
the intricate task of proofing the code for concurrency to the
synthesis tool.

\noindent
{\bf Program repair for concurrency.} 
The specification is provided by assertions placed by the programmer 
in the code. 
%Other properties like deadlock-freedom can be modelled using assertions.
A trace, which runs without any assertion failure, is called ``good'',
and conversely a trace with an assertion failure is ``bad''.
We assume that the good traces specify the intent of the programmer.
% Conversely, a trace is ``bad'' if it 
% violates the specification. 
A trace is complete if every thread finishes its execution. 
A trace of a multi-threaded program is preemption-free if a thread is 
de-scheduled only at preemption-points, i.e., when a thread tries to
execute a blocking operation, such as obtaining a lock. 

Given a multithreaded program in which all complete preemption-free
traces are good, the  program repair for concurrency problem is to find
a program for which the following two conditions hold: 
(a)~all bad traces of the original program are removed; and 
(b)~all the complete preemption-free traces are preserved.
We further extend this problem statement by saying that if not all
preemption-free traces are good, but all complete sequential traces are
good, then we need to find a program such that (a)~holds, and all
complete sequential traces are preserved.

\noindent
{\bf Regression-free algorithms.} Let us consider a trace-based algorithm
for program repair, that is, an iterative algorithm that in each
iteration is given a trace (good or bad) of the program-under-repair,
and produces a new program based on the traces seen. 
We say that such an algorithm is {\em regression-free} if after every
iteration, we have that: first, all bad traces examined so far are
removed, and second, all good traces examined so far are not turned
into bad traces of the new program. (Of course, to make this
definition precise, we will need to define a correspondence between
traces of the original program and the new program.)

\noindent
{\bf Program transformations.}
In order to remove bad traces, we apply the following program
transformations: (1)~reordering of adjacent instructions $i_1;i_2$ within 
a thread if the instructions are sequentially independent (i.e., if 
$i_1;i_2$ is sequentially equivalent to $i_2;i_1$), and (2)~inserting
atomic sections. 
The reordering of instructions is given priority as it may result in a 
better performance than the insertion of atomic sections. Furthermore, 
the reordering of instructions removes a surprisingly large number of
concurrency bugs that occur in practice; according to a study of how
programmers fix concurrency bugs in Linux device drivers~\cite{cav2013},
reordering of instructions is the most commonly used.  
%If no reordering helps, we turn to atomic sections to prevent 
%atomicity violations (see~\cite{lucia-pldi} for a classification of
%concurrency bugs).

\noindent
{\bf Our algorithm.}
Our algorithm learns constraints on the space of candidate solutions
from both good traces and bad traces.
We explain the constraint learning using as an example the program 
transformation~(1), which reorders instructions within threads. 
From a bad trace, we learn reordering constraints that eliminate 
the counterexample using the algorithm of~\cite{cav2013}.  
While eliminating the counterexample, such reorderings may transform a
(not necessarily preemption-free) good trace 
% without assertion failures 
into a bad trace
% along which some assertion fails 
--- this would constitute a regression.
In order to avoid regressions, our algorithm learns also from 
good traces. 
Intuitively, from a good trace $\trace$, we want to learn all the ways 
in which $\trace$ can be transformed by reordering without turning it into
an error trace--- this is expressed as a program constraint.
The program constraint is
\begin{inparaenum}[(a)]
\item~sound, if all programs satisfying the constraint are
  regression-free; and
\item~complete, if all programs violating the constraint have
  regressions.
\end{inparaenum}
However, as learning a sound and complete constraint is
computationally expensive, given a good trace $\trace$ we learn a sound
constraint that only guarantees that $\trace$ is not transformed into a
bad trace. We generate the constraint using data-flow analysis
on the instructions in $\trace$. The main idea of the 
analysis is that in good traces, the data-flow into passing assertions
is protected by synchronization mechanisms (such as locks) and data-flow
into conditionals along the trace. 
This protection may fail if we reorder instructions. We thus find
a constraint that prevents such bad reorderings. 
 
Summarizing, as the algorithm progresses and sees a set of bad
traces and a set of good traces, it learns constraints that encode the ways
in which the program can be transformed in order to eliminate the 
bad traces without turning the good traces into bad traces of the
resulting program. 

\noindent
{\bf CEGIS vs PACES.}
%\paragraph{CEGIS vs PACES.} 
A popular recent approach to synthesis is counterexample-guided 
inductive synthesis (CEGIS)~\cite{asplos06}.
Our algorithm can be viewed as an instance of CEGIS with the important
feature that we learn from positive examples.
We dub this approach PACES, for {\em Positive- and Counter-Examples in
Synthesis}.
The input to the CEGIS algorithm is a specification~$\varphi$ 
(possibly in multiple pieces -- say, as a temporal formula and a
language of possible solutions~\cite{sygus}).  
In the basic CEGIS loop, the synthesizer proposes a 
candidate solution~$S$, which is then checked against~$\varphi$.  
If it is correct, the CEGIS loop terminates; if not, a counterexample is 
provided and the synthesizer uses it to improve~$S$.
In practice, the CEGIS loop often faces performance issues, in particular, 
it can suffer from regressions: new candidate solutions may introduce 
errors that were not present in previous candidate solutions.
% (or in the original program). 
We address this issue by {\em making use of positive examples} (good 
traces) in addition to counterexamples (bad traces). The good traces
are used to learn constraints that ensure that these good traces are
preserved in the candidate solution programs proposed by the CEGIS loop. 
The PACES approach applies in many program synthesis contexts, but in
this paper, we focus on program repair for concurrency.

\remove{
\noindent
{\bf Experimental evaluation.}
To evaluate our approach, we implemented a program repair tool and 
applied it to a collection of (simplified) open-source Linux device 
drivers. 
We looked at concurrency bugs that were reported and fixed in the 
Linux kernel repository. 
We used five examples, where we modelled the concurrency skeleton in 
sufficient detail to reproduce the bug (between 35 and 80 lines of 
code per example). 
In addition, we did two larger case studies, of the {\tt rtl8169} and 
{\tt usb-serial} drivers, modelled in more detail, with about 400 lines 
of code each. 
As explained above, our tool tries to fix a bug by reordering 
instructions within threads, which is the preferred solution and/or 
by inserting atomic sections.
In each case, our tool found a solution that fixed the 
problem (or, in the case of {\tt rtl8169}, multiple problems).  
To evaluate the impact of using positive examples, we compared our
tool with two earlier versions (based on~\cite{cav2013}), which do not use 
positive examples. 
The first version ({\tt ce1}) prefers to exhaust all possible
instruction reorderings before using atomic sections; the second 
version ({\tt ce2}) heuristically decides to insert atomic sections
earlier.  
We found that (a)~the new tool converges to a solution in a significantly 
smaller number of iterations than ({\tt ce1}), and (b)~the new tool finds 
solutions with fewer atomic sections than ({\tt ce2}) in a comparable 
number of iterations. 
We thus conclude that the use of positive examples can substantially 
improve the performance and quality of counterexample-guided inductive 
synthesis algorithms.
}

\noindent
{\bf Related work.}
The closest related work is by von Essen and Jobstmann~\cite{barbara},
which continues the work on program
repair~\cite{JGB05,GBC06,SDE08}. In~\cite{barbara}, 
the goal is to repair reactive systems (given as automata) according to an
LTL specification, with a guarantee that good traces do not disappear
as a result of the repair. Their algorithm is based on the
classic synthesis algorithm which translates the LTL specification to
an automaton. In contrast, we focus on the repair of concurrent
programs, and our algorithm uses positive examples and
counterexamples. 
% in the PACES loop.  

There are several recent algorithms for inserting synchronization by locks,
fences, atomic
sections, and other synchronization primitives~(\cite{Vechev:2010:ASS:1706299.1706338,Cherem:2008:ILA:1375581.1375619,ramalingam,SLJB08}).
%Most of these works do not consider bug fixing by instruction
%reordering.
Deshmukh et al.\ \cite{ramalingam} is the only one of these which
uses information about the correct parts of the program in bug fixing -- 
a proof of sequential correctness is used to identify positions for
locks in a concurrent library that is sequentially correct. 
CFix (Jin et al.\ \cite{shanlu}) can detect and fix
concurrency bugs using specific bug detection patterns and a fixing
strategy for each pattern of bug. 
Our approach relies on a general-purpose model checker and does not use
any patterns.

Our algorithm for fixing bad traces starts by generalizing
counterexample traces.  
In verification (as opposed to synthesis), 
concurrent trace generalization was used by
Sinha et
al.\ \cite{Sinha:2011:IA:1926385.1926433,DBLP:conf/sigsoft/SinhaW10};
and by Alglave et
al.\ \cite{DBLP:conf/cav/AlglaveKT13} for detecting errors due to weak
memory models.
Generalizations of good traces was previously used by Farzan et
al.\ \cite{Farzan:2013:IDF:2480359.2429086}, who create an inductive
data-flow graph (iDFG) to represent a proof of program correctness. 
They do not attempt to use iDFGs in synthesis. 

We use the model checker CBMC~\cite{cbmc} to generate both good and bad
traces.
Sen introduced concurrent directed random testing
\cite{Sen:2008:RDR:1375581.1375584}, which can be used to obtain good
or bad traces much faster than a model checker.
For a 30k LOC program their tool needs only about 2
seconds. 
We could use this tool to initially obtain good and bad traces
faster, thus increasing the scalability of our tool. 

% From last year's review:
% By the way, the idea of generalization of traces has been used extensive in 
% bug finding before. There is a paper in POPL 2013 (by Farzan et al) that does 
% this specifically for concurrency. Also, the idea of looking for cycles in 
% partial orders to detect violations of atomicity goes back to the conflict 
% serializability detection algorithm introduced by Papadimitriou, and has since 
% been used (in different forms) in a lot of papers. For example, you can replace 
% Algorithm 2 by a variant of an algorithm that uses the logical encoding (of 
% consistency conditions) by Sinha et al. You do blur the lines a little bit when 
% it comes to stating these as novel parts of your work.

% I think he means Algorithm 1

\noindent
{\bf Illustrative example.}
%\label{sec:illustrative_example}
We motivate our approach on the program $P$ in
Figure~\ref{fig:example}.    
There is a bug witnessed by the following trace:
$\trace_1 = A \to B \to 1 \to 2 \to 3$ (the assertion at line $3$ fails). 
Let us attempt to fix the bug using the algorithm from~\cite{cav2013}.
The algorithm discovers possible fixes by first generalizing the trace
into a partial order (Figure~\ref{fig:partialorder}, without the dotted 
edges) representing the happens-before relations necessary for the bug
to occur, and second, trying to create a cycle in the partial order to
eliminate the generalized counterexample.
It finds three possible ways to do this: swapping $B$ and
$C$, or moving $C$ before $A$, or moving $A$ after $C$, indicated by
the dotted edges in Figure~\ref{fig:partialorder}. 
Assume that we continue with swapping $B$ and $C$ to obtain program
$P_1$ where the first thread is $A; C; B$. 
Program $P_1$ contains an error trace $\trace_2 = A \to C \to
n \to p$ (the assertion at line $p$ fails). 
This bug was not in the original program, but was introduced
by our fix. 
We refer to this type of bug as a regression. 

%We see it was a premature to choose a fix without having information
%about the good parts of the program $P$. 
In order to prevent regressions, the algorithm learns from good
traces. 
Consider the following good trace $\trace_3 = A \to  B \to  C \to  1 \to 
2 \to  n \to  3 \to  p$.
The algorithm analyses the trace, and produces the graph in 
Figure~\ref{fig:colour1}. 
Here, the thick red edges indicate the reads-from
relation for {\tt assert} commands, and the dashed blue edges indicate
the reads-from relation for {\tt await} commands. 
Intuitively, the algorithm now analyses why the assertion at line $p$
holds in the given trace. 
This assertion reads the value written in line $B$ (indicated by the
thick red edge).
The algorithm finds a path from $B$ to $p$ composed entirely from
intra-thread sequential edges ($B \to C$ and $n \to p$) and dashed blue
edges ($C \to n$).
This path guarantees that this trace cannot be changed by different
scheduler choices into a path where $p$ reads from elsewhere and fails. 
From the good trace $\trace_2$ we thus find that there could be a
regression unless $B$ precedes $C$ and $n$ precedes $p$. 
Having learned this constraint, the synthesizer can find a better way
to fix $\trace_1$. 
Of the three options described above, it chooses the only way which does not
reorder $B$ and $C$, i.e., it moves $A$ after $C$. 
This fixes the program without regressions.

\begin{figure}[tb]
\ffigbox
{%
  \begin{subfloatrow}[3]
    \ffigbox[\FBwidth]% Width of subfloat
    {%
    \begin{minipage}{6cm}
      \small{ \tt \setlength{\tabcolsep}{2pt}
    init: x = 0; y = 0; z = 0\\
    \begin{tabular}{l l l}
    \underline{thread1}&\underline{thread2}&\underline{thread3}\\
      1: await(x==1)  & A: x:=1 & n: await(z==1)\\
      2: await(y==1)  & B: y:=1 & p: assert(y==1)\\
      3: assert(z==1)  & C: z:=1 \\
      \end{tabular}
      }
      \end{minipage}
    }
    {%
      \caption{Program $P$}\label{fig:example}%
    }
    \ffigbox[\FBwidth+0.3cm]% Width of subfloat
    {%
    \begin{tikzpicture}
      \tikzstyle{every state}=[draw=black, minimum size=12]
      \node[state]             (1) {{\tt 1}};
      \node[state,below of=1,yshift=0.25cm]  (2) {{\tt 2}};
      \node[state,below of=2,yshift=0.25cm]  (3) {{\tt 3}};
      \node[state,left of=1]  (a) {{\tt A}};
      \node[state,below of=a,yshift=0.25cm]  (b) {{\tt B}};
      \node[state,below of=b,yshift=0.25cm]  (c) {{\tt C}};

      \path[->]
            (b) edge (2)
            (1) edge (2)
            (2) edge (3)
            (3) edge (c)
            (a) edge (1)
            (c) edge[dotted] (b)
            (c) edge[in=200,out=160,dotted] (a)
      ;

    \end{tikzpicture}
    }
    {%
      \subcaption{Reorderings from bad traces}\label{fig:partialorder}%
    }
    
    \ffigbox[\FBwidth]% Width of subfloat
    {%
		\begin{tikzpicture}
      \tikzstyle{every state}=[draw=black, minimum size=12]
			\node[state]             (1) {{\tt 1}};
			\node[state,below of=1,yshift=0.25cm]  (2) {{\tt 2}};
			\node[state,below of=2,yshift=0.25cm]  (3) {{\tt 3}};
			\node[state,right of=1,xshift=0.15cm]  (a) {{\tt A}};
			\node[state,below of=a,yshift=0.25cm]  (b) {{\tt B}};
			\node[state,below of=b,yshift=0.25cm]  (c) {{\tt C}};
			\node[state,right of=a,xshift=0.15cm]  (n) {{\tt n}};
			\node[state,below of=n,yshift=0.25cm]  (p) {{\tt p}};

			\path[->]
						%(2) edge  (3)
						%(1) edge  (2)
						(n) edge  (p)
						(c) edge[red,thick] (3)
						(b) edge[red,thick] (p)
						(a) edge[blue,dashed] (1)
						(b) edge[blue,dashed] (2)
						(c) edge[blue,dashed] (n)
						(b) edge (c)
						%(c) edge[in=240,out=120,dashed] (a)
			;
    \end{tikzpicture}
    }
    {%
			\subcaption{Learning from a good trace}\label{fig:colour1}%
		}
    
  \end{subfloatrow}

}
{
  \vspace{-2ex}
  \caption{Program analysis with good and bad
traces}\label{fig:simple_example}}
\end{figure}
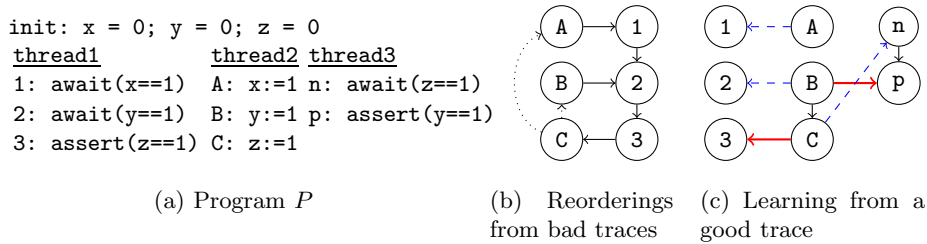

%Our technique applies to real concurrency bugs as found in the Linux kernel.
%We explain the technique on several examples of simplified Linux
%drivers in sec%tion~\ref{sec:impl}.  

\section{Programming Model and the Problem Statement}
\label{sec:defs}

% \subsection{Programming Model}

% \paragraph{Program Variables.}
% \paragraph{A Concurrent While-language.}
% Fix $\Vars$ to be a finite set of variables.
% , and let $\Vars'$ contain
% the primed versions of variables in $\Vars$.
% A {\em valuation} $\Valuation$ of $\Vars$ is a map from variables in
% $\Vars$ to their values.
% Following standard procedure (see for example,~\cite{}), we model
% program statements as first-order formulae over $\Vars \cup \Vars'$
% where the unprimed and primed versions of a variable represent its
% values before and after execution of the statement.
% For example, an assignment {\tt x := x + y} would be represented as the
% assertion $x' = x + y$.
% We assume that there is a special variable $\errVar \in \Vars$ whose
% value indicates whether or not a failure has occurred.
% -- it is $0$ (resp. $1$) if an error has not (resp. has) occurred.
%
% Low-level kernel code and device drivers usually incorporate aspects of
% both shared-memory and message-passing concurrency paradigms.
% Low-level device drivers use many complex concurrency constructs.
% Here, we use a simplified shared-memory programming model with just $3$
% constructs -- locks, atomic sections and wait/notify semaphores.
Our programs are composed of a fixed number (say $n$) threads
written in the {\sc Cwhile} language (Figure~\ref{fig:syntax}).
Each statement has a unique program location and each thread has unique
initial and final program locations.
% The constructs in {\sc Cwhile} are standard.
Further, we assume that execution does not stop on assertion
failure, but instead, a variable $\errVar$ is set to $1$.
The {\tt await} construct is a blocking assume, i.e., execution of
{\tt await(cond)} stops till {\tt cond} holds.
For example, a lock construct can be modelled as {\tt
atomic \{  await(lock\_var == 0); lock\_var := 1 \}}.
Note that {\tt await} is the only blocking operation in {\sc Cwhile} --
hence, we call the {\tt await} operations {\em preemption-points}.
\begin{figure} 
  \vspace{1ex}
\begin{alltt}
iexp ::= iexp + iexp | iexp / iexp | iexp * iexp | var | constant
bexp ::= iexp >= iexp | iexp == iexp | bexp && bexp | !bexp
stmt ::= variable := iexp | variable := bexp | stmt; stmt | assume(bexp) 
         | if (*) stmt else stmt | while (*) stmt | atomic \{ stmt \}
         | assert(bexp) | await(bexp)
thrd ::= stmt                               prog  ::= thrd | prog\(\parallel\)thrd
\end{alltt}
\vspace{-4ex}
\caption{Syntax of programming language}
\label{fig:syntax}
\end{figure}

\!\!\!\!\!\!\!\!
\paragraph{Semantics.}
The {\em program-state} $\ProgState$ of a program $\Prog$ is given by
$(\Valuation, (l^1, \ldots, l^n))$ where $\Valuation$ is a
valuation of variables, and each $l^t$ is a thread~$t$~program location.
Execution of the thread~$t$ statement at location $l^t$ is
represented as $\ProgState l^t \ProgState'$ where
$\ProgState = (\Valuation, (\ldots, l^t, \ldots))$ and
$\ProgState' = (\Valuation', (\ldots, l^{t'}, \ldots))$,
and $l^{t'}$ and $\Valuation'$ are the program location and
variable valuation after executing the statement from $\Valuation$. 
A {\em trace} $\trace$ of $\Prog$ is a sequence $\ProgState_0 l_0
 \ldots \ProgState_m$ where 
\begin{inparaenum}[(a)]
\item $\ProgState_0 = (\Valuation, (l^1_\iota, \ldots,
  l^n_\iota))$ where each $l^t_\iota$ is the initial location of
  thread $t$; and
\item each $\ProgState_i l_i \ProgState_{i+1}$ is a thread~$t$
  transition for some $t$.
% \item if $l_i$ is an {\tt assume(b)} or {\tt await(b)} and $\ProgState_i
% = (\Valuation_i, (l^1_i, \ldots, l^n_i))$, then $b$ evaluates to true
% under the valuation $\Valuation_i$.
% \arsays{ Not required as that is the semantic}
% \item there is no transition $\ProgState_m l_m \ProgState_{m+1}$ for
  % any $l_m$ and $\ProgState_{m+1}$.
\end{inparaenum}
Trace $\trace$ is {\em complete} if $\ProgState_m =
(\Valuation_m, (l_f^1, \ldots, l_f^k))$, where each $l_f^t$ is the
final location of thread~$t$.
We say $\ProgState_i l_i \ldots \ProgState_n$ is {\em equal modulo
error-flag} to $\ProgState_i' l_i \ldots \ProgState_n'$ if each
$\ProgState_k$ and $\ProgState_k'$ differ only in the valuation of the
variable $\errVar$.
%
% \arsays{Wrong place for these following definitions}
% {\new 
% We write $i<j$ to indicate that instruction $i$ occurs before $j$ in
% the trace.
% By $\ProgState[v]$ we denote the value of variable $v$ in $\ProgState$.
% We say $\ProgState_x=\ProgState_y$ if $\forall v.\ \ProgState_x[v]=\ProgState_y[v]$.
% }

% A {\em trace} of $\Prog$ is a sequence $(\Valuation^0, l^0_1,
% \ldots, l^0_n)(\Valuation^1, l^1_1, \ldots, l^1_n)(\Valuation^2,
% l^2_1, \ldots, l^2_n)\ldots$ where for each $i \in
% \mathbb{Z}^{\geq 0}$, there exists a $\thread(i) \in \{ \CFG_1, 
% \ldots, \CFG_n \}$ such that $((\Valuation^i, l^i_1, \ldots, l^i_n),
% (\Valuation^{i+1}, l^{i+1}_1, \ldots, l^{i+1}_n))$ is a
% thread-$\thread(i)$ transition. 
% A trace is error-free if the special variable $\Valuation^i(\errVar)
% \neq 1$ for all $i \in \mathbb{Z}^{\geq 0}$.

% A trace is {\em preemption-free} if for each $i \in \mathbb{Z}^{\geq
% 0}$ if $((\Valuation^i, l^i_1, \ldots, l^i_n), (\Valuation^{i+1},
% l^{i+1}_1, \ldots, l^{i+1}_n))$ is a $\CFG_i$ transition, and
% $((\Valuation^{i+1}, l^{i+1}_1, \ldots, l^{i+1}_n), (\Valuation^{i+2},
% l^{i+2}_1, \ldots, l^{i+2}_n))$ is a $\CFG_j$ transition with
% $i \neq j$, then there exists no $(\Valuation',
% l'_1, \ldots, l'_n)$ such that $((\Valuation^i, l^i_1, \ldots,
% l^i_n), (\Valuation', l'_1, \ldots, l'_n))$ is a $\CFG_i$
% transition.
Trace $\trace$ is {\em preemption-free} if every context-switch occurs
either at a preemption-point ({\tt await} statement) or at the end of a
thread's execution, i.e., if where $\ProgState_i l_i
\ProgState_{i+1}$ and $\ProgState_{i+1} l_{i+1} \ProgState_{i+2}$ are
transitions of different threads (say threads $t$ and $t'$), either the
next thread~$t$ instruction after $l_i$ is an {\tt await}, or the
thread~$t$ is in the final location in $\ProgState_{i+1}$.
Similarly, we call a trace {\em sequential} if every context-switch
happens at the end of a thread's execution.

A trace $\trace = \ProgState_0 l_0 \ldots \ProgState_m$
is {\em bad} if the error variable $\errVar$ has values $0$ and $1$ in
$\ProgState_0$ and $\ProgState_m$, respectively; otherwise, $\trace$ is
{\em good} trace.
We assume that the bugs present in the input
programs are {\em data-independent} -- if $\trace = \ProgState_0 l_0
\ProgState_1 \ldots \ProgState_n $ is bad, so is every trace $\trace'
=\ProgState'_0 l'_0 \ProgState'_1 \ldots \ProgState'_n $ where
$l_i=l'_i$ for all $0 \leq i < n$.

\comment{
\arsays{Why is this even here? Can't this be done using program and free
transformations acting on the trace. 

The below definition is simply saying transform $\trace$ to $\trace'$
using program transformations, but not free transformations
}
{\new Let there be two traces $\trace = \ProgState_0 l_0 \ProgState_1 l_1 \ProgState_2 l_2
\ldots \ProgState_m$ and $\trace' = \ProgState'_0 l'_0 \ProgState'_1 l'_1 \ProgState'_2 l_2
\ldots \ProgState'_m$ that may originate from different programs and
\begin{inparaenum}[(a)]
 \item there exists a bijection $m$ that maps indices in $\trace$ to $\trace'$ such that $\forall i.\ l_i=l'_{m(i)}$; and
 \item the result stays the same: $\ProgState_0=\ProgState'_0\implies\ProgState_m=\ProgState'_m$.
\end{inparaenum}
We call these traces  {\em equal modulo reordering} if for any two preemption points
$l_{c1},l_{c2}$, $\forall i.\ l_{c1}<l_i<l_{c2}\implies l_{m(c1)}<l_{m(i)}<l_{m(c2)}$.
We call them {\em sequentially equivalent} if for any two final thread locations
$l_f^x,l_f^y$, $\forall i.\ l_f^x<l_i<l_f^y\implies l_f^{'x}<l'_{m(i)}<l_f^{'y}$.
}
}
 
%\subsection{Program Transformations}

\comment{
Partial-programs are partially specified programs containing
non-deterministic constructs.
Here, we extend {\sc Cwhile} %to {\sc NCwhile} 
with three non-deterministic constructs:
\begin{compactitem}
\item {\tt reorder \{.\} }:  Intuitively, the statements
  inside a {\tt reorder \{.\}} block are to be executed in some order.
  For example, {\tt reorder \{ x := 1; y := 1 \}} is a choice between
  the fragments {\tt x := 1; y:= 1} and {\tt y := 1; x := 1}.
\item {\tt atomize \{.\} }: An {\tt atomize} block is a
  non-deterministic choice on which (if any) sub-sequence of statements
  inside are to be executed atomically.
  Colloquially, this is equivalent to choosing which atomic section to
  insert.
\end{compactitem}
A program $\Prog$ is a {\em specialization} of a partial-program
$\PProg$ (written $\Prog\in\PProg$) if every non-deterministic block is replaced by a particular
determinization (e.g., {\tt reorder} blocks by a particular
reordering).
}

\comment{
\paragraph{Generalizing a program to a partial-program.}
Given a program $\Prog$, we define the preemption-free
semantics-preserving partial-program $\PProg = \SemPreGen{\Prog}$.
The procedure to compute $\SemPreGen{\Prog}$ is a generalization of the
procedure described in~\cite{cav2013}.
Intuitively, the procedure works as follows: 
\begin{compactitem}
\item The procedure iteratively picks neighbouring statements (say
  $A$ and $B$) and checks for a variable valuation such
  that executing $A; B$ results in a different result from executing $B; A$.
  If not, $A$ and $B$ are added to a {\tt reorder} block.
\item The procedure then inserts the largest {\tt atomize} blocks which
  do not contain any {\new {\tt await} statements}.
\end{compactitem}
}

\noindent{\bf Program transformations and Program constraints.}
We consider two kinds of transformations for fixing bugs:
\begin{compactitem}
\item~A {\em reordering transformation} $\reorder = \switch{l_1}{l_2}$
  transforms $\Prog$ to $\Prog'$ if location $l_1$ immediately precedes
  $l_2$ in $\Prog$ and $l_2$ immediately precedes $l_1$ in
  $\Prog'$.
  We only consider cases where the sequential semantics
  are preserved, i.e., if
  % a reordering $\switch{l_1}{l_2}$ is allowed if and only if
  \begin{inparaenum}[(a)]
  \item~$l_1$ and $l_2$ are from the same basic block; and
  \item~$l_1; l_2$ is equivalent to $l_2; l_1$.
  \end{inparaenum}
\item~An {\em atomic section transformation} $\reorder = [l_1;l_2]$
  transforms $\Prog$ to $\Prog'$ if 
  % location $l_1$ is immediately before location $l_2$ and
  neighbouring locations $l_1$ and $l_2$ are in an atomic
  section in $\Prog'$, but not in $\Prog$.
\end{compactitem}
We write
$\Prog\xrightarrow{\reorder_1\ldots \reorder_k}\Prog'$ if applying each
of $\reorder_i$ in order transforms $\Prog$ to $\Prog'$.
We say transformation $\reorder$ {\em
acts across preemption-points} if either $\reorder = \switch{l_1}{l_2}$
and one of $l_1$ or $l_2$ is a preemption-point; or if $\reorder = [l_1;
l_2]$ and $l_2$ is a preemption-point.
%
% Further, 
% extending the terminology,

Given a program $\Prog$, we define {\em program constraints} to
represent sets of programs that can be obtained through
applying program transformations on $\Prog$.
% We consider the following types of program constraints:
\begin{compactitem}
\item~{\em Atomicity constraint}: Program $\Prog'\!\models [l_i; l_j]$ if
  $l_i$ and $l_j$ are in an atomic block.
\item~{\em Ordering constraint}: Program $\Prog' \models l_i \leq
  l_j$ if $l_i$ and $l_j$ are from the same basic block and either
  $l_i$ occurs before $l_j$, or $\Prog'$ satisfies $[ l_i; l_j]$.
\end{compactitem}
If $\Prog' \models \ProgCons$, we say that $\Prog'$ {\em satisfies}
$\ProgCons$.
Further, we define conjunction of $\ProgCons_1$ and
$\ProgCons_2$ by letting $\Prog' \models \ProgCons_1 \wedge \ProgCons_2
\Leftrightarrow \left( \Prog' \models \ProgCons_1 \wedge \Prog' \models
\ProgCons_2 \right)$.

\noindent{\bf Trace Transformations and Regressions.}
%
% \paragraph{Free Trace Transformations.}
A trace $\trace = \ProgState_0 l_0  \ldots \ProgState_m$
{\em transforms} into a trace $\trace' = \ProgState_0' l_0'
 \ldots \ProgState_m'$ by {\em switching} if:
\begin{inparaenum}[(a)]
\item $\ProgState_0 l_0 \ldots \ProgState_n = \ProgState_0' l_0' \ldots
  \ProgState_n'$ and the suffixes
  $\ProgState_{n+2} l_{n+2} \ldots \ProgState_m$ and $\ProgState_{n+2}'
  l_{n+2}' \ldots \ProgState_m'$ are equal modulo error-flag; and
\item $l_n = l_{n+1}' \wedge l_{n+1} = l_n'$.
\end{inparaenum}
% Intuitively, $\trace'$ is obtained from $\trace$ by switching 
% statements $l_n$ and $l_{n+1}$ without changing the rest of the trace.
We label switching transformations as a:
\begin{compactitem}
\item {\em Free transformation} if $l_n$ and $l_{n+1}$ are from
  different threads. 
  We write $\trace'\in f(\trace)$ if a sequence of free transformations
  takes $\trace$ to $\trace'$.
\item {\em Reordering transformation $\reorder =
    \switch{l^\sharp}{l^\flat}$ acting on $\trace$} if $l_n = l^\sharp$
    and $l_{n+1} = l^\flat$.
  % \arsays{clarify. put repeated applications}
  We have $\trace' \in \reorder(\trace)$ if repeated applications of
  $\reorder$ transformations acting on $\trace$ give $\trace'$. 
  Similarly, $\trace'\in \reorder^f(\trace)$ if repeated applications of
  $\reorder$ and free transformations acting on $\trace$
  give $\trace'$.
\end{compactitem}
Similarly, $\trace'$ is obtained by {\em atomicity
transformation $\reorder = [l_1, l_2]$ acting on a trace} $\trace$
if $\trace' \in f(\trace)$, and there are no context-switches between
$l_1$ and $l_2$ in $\trace'$.

\paragraph{Trace analysis graphs.}
We use trace analysis graphs to characterize data-flow and
scheduling in a trace.
First, given a trace $\trace = \ProgState_0 l_0 
\ldots$, we define
% \begin{compactitem}
% \item~The functions $\readn(i)$ and $\writen(i)$ returns a set of
  % variables the statement at location $l_i$ reads and writes,
  % respectively;
% \item~the functions $\before(i)$ returns all command before $i$ in the
  % trace, similar for $\after(i)$ (\arsays{This should not be needed}).
% \item~If $v \in \writen(i)$, the function $\written(i,v)$ returns the
  % value the statement at location $l_i$ wrote to variable $v$; 
% \item~The functions $\mathit{awaits}$, $\mathit{assumes}$, and
  % $\mathit{asserts}$ denote the set of all locations in the trace that 
  % correspond to {\tt await} statements, {\tt assume} statements, and
  % {\tt assert} statements respectively.
% \item~The function $\last(i,v)$ returns the location of the latest
  % statement before $l_i$ that writes to variable $v$;
% \item~
the function $\depends$ to recursively find the data-flow
edges into the $l_i$.
Formally, $\depends(i) = \cup_{v}\{ (\last(i,v), i) \} \cup
\depends(\last(i,v))$ where $v$ ranges over variables read by $l_i$, and
$\last(i,v)$ returns $j$ if $l_i$ reads the value of $v$ written by
$l_j$ and $\last(i,v) = \bot$ if no such $j$ exists.
As the base case, we define $\depends(\bot) = \emptyset$.
  % in the trace. 
  % the position of the
  % last write to $v$ before $l_i$.
% \end{compactitem}

Now, a {\em trace analysis graph} for trace $\trace =
\ProgState_0l_0\ldots \ProgState_n$ is a
multi-graph  $G(\trace) = \langle V, \to \rangle$, where $V = \{ \bot \}
\cup \{ i | 0 \leq i \leq n \}$ are the positions in the trace along
with $\bot$ (representing the initial state) and $\to$ contains the
following types of edges.
% (we
% label the types with colours for convenience).
% followi $E$ are reads-from and sequential
% thread-order edges labelled with a colour. 
% The edges are labelled with a colour. 
% We denote $(x\to y)$ as an edge in
% the graph between two locations $x$ and $y$.
\begin{compactenum}
\item~{\em Intra-thread order} ($\blackn$): We have $x \to y$ if either
  $x < y$, and $l_x$ and $l_y$ are from the same thread, or if $x =
  \bot$.
\item~{\em Data-flow into conditionals} ($\bluen$): We have
  $\bigcup_{a\in\mathit{conds}} \depends(a) \subseteq \to$ where $x \in
  conds$ iff $l_x$ is an assume or an await statement.
\item~{\em Data-flow into assertions} ($\redn$): We have
  $\bigcup_{a\in\mathrm{asserts}} \depends(a) \subseteq \to$ where
  $x \in asserts$ iff $l_x$ is an assert statement.
\item {\em Non-free order} ($\pinkn$): We have $x \to y$ if $l_x$ and
  $l_y$ write two different values to the same variable. 
  Intuitively, the non-free orders prevent switching transformations
  that switch $l_x$ and $l_y$.
\end{compactenum}
% Given a trace $\trace$, we define $\bluen(\trace)$ to be the set of blue
% edges in the trace analysis graph $G(\trace)$.

\paragraph{Regressions.}
Suppose $\Prog\xrightarrow{\reorder_1,\ldots,\reorder_k}\Prog'$.
We say $\reorder_1,\ldots,\reorder_k$ introduces a {\em
regression} with respect to a good trace $\trace = \ProgState_0 l_0
\ldots \ProgState_m$ of $\Prog$ if there exists a trace $\trace' = \ProgState_0' l_0'
\ldots  \ProgState_m' \in \reorder^f_k\circ\ldots\circ
\reorder^f_1(\trace)$ such that:
\begin{inparaenum}[(a)]
\item $\trace'$ is a bad trace of $\Prog'$;
\item $\trace$ does not freely transform into any bad trace of $\Prog$; and
\item for every data-flow into conditionals edge $x \to y$ (say $l_y$
  reads the variables $\Vars$ from $l_x$) in $G(\trace)$, the edge $p(x)
  \to p(y)$ is a data-flow into conditionals edge in $G(\trace')$ (where
  $l'_{p(y)}$ reads the same variables $\Vars$ from $l'_{p(x)}$).
  Here, $p(i)$ is the position in $\trace'$ of instruction at position
  $i$ in $\trace$ after the sequence of switching transformations that
  take $\trace$ to $\trace'$.
\end{inparaenum}
We say $\reorder_1\ldots\reorder_k$ introduces a regression with respect
to a set $T_G$ of good traces if it introduces a
regression with respect to at least one trace $\trace \in T_G$.

Intuitively, a program-transformation induces a regression if it allows
a good trace $\trace$ to become a bad trace $\trace'$ due to the program
transformations.
Further, we require that $\trace$ and $\trace'$ have the conditionals
enabled in the same way, i.e., the {\tt assume} and {\tt await}
statements read from the same locations.
% Note that atomic section transformations cannot introduce regressions by
% themselves.
\begin{remark}
  \label{rem:regressions}
  The above definition of regression attempts to capture the intuition
  that a good trace transforms into a ``similar'' bad trace.
  The notion of similar asks that the traces have the same data-flow
  into conditionals -- this condition can be relaxed to obtain more
  general notions of regression.
  However, this makes trace analysis and finding regression-free fixes
  much harder (See Example~\ref{ex:non_locality}).
\end{remark}

\begin{example}
  In Figure~\ref{fig:simple_example}, the trace $\trace = A ; B
  ; C ; n ; p$ transforms under $\switch{B}{C}$ to $\trace' = A
  ; C ; B ; n ; p$, which freely transforms to $\trace'' = A ;
  C ; n ; p ; B$.
  Hence, $\switch{B}{C}$ introduces a regression with respect to $\trace$
  as $\trace$ does not freely transform into a bad trace,
  and $\trace'$ is bad while the {\tt await} in $n$ still reads from $C$.
\end{example}

\noindent{\bf The Regression-free Program-Repair Problem.}
% We are now ready to state the main program-repair problems for
% regression-free synthesis.
Intuitively, the program-repair problem asks for a correct program
$\Prog'$ that is a transformation of $\Prog$.
Further, $\Prog'$ should preserve all sequential behaviour
of $\Prog$; and if all preemption-free behaviour of $\Prog$ is good, we
require that $\Prog'$ preserves it.

\paragraph{Program repair problem.}
The input is a program $\Prog$ where all complete sequential traces are
good.
The result is a sequence of program transformations $\reorder_1\ldots
\reorder_n$ and $\Prog'$, such that
\begin{inparaenum}[(a)]
\item~$\Prog\xrightarrow{\reorder_1\ldots \reorder_n}\Prog'$;
\item~$\Prog'$ has no bad traces;
\item~for each complete sequential trace $\trace$ of $\Prog$, there
  exists a complete sequential trace $\trace'$ of $\Prog'$ such that
  $\trace' \in \reorder_1 \circ \reorder_2 \ldots \circ \reorder_n
  (\trace)$; and
\item~if all complete preemption-free traces of $\Prog$ are good, then
  for each such trace $\trace$, there exists a complete preemption-free
  trace $\trace'$ of $\Prog'$ such that $\trace' \in \reorder_1 \circ
  \reorder_2 \ldots \circ \reorder_n (\trace)$.
\end{inparaenum}
We call the conditions (c)~and (d)~the {\em preservation of sequential and
correct preemption-free behaviour}.
% Note that we do not allow free transformations in the previous
% definition, i.e., we use $\reorder_i$ instead of $\reorder_i^f$.

\paragraph{Regression-free error fix.}
Our approach to the above problem is through repeated
regression-free error fixing.
% Intuitively, the regression-free error fixing problem asks to fix one
% bug without introducing regressions.
Formally, the regression-free error fix problem takes a set of good
traces $T_G$, a program $\Prog$ and a bad trace $\trace$ as input,
and produces transformations $\reorder_1,\ldots,\reorder_k$ and $\Prog'$
such that $\Prog\xrightarrow{\reorder_1\ldots\reorder_k}\Prog'$,
$\trace'\in\reorder^f_k\circ\ldots\circ\reorder^f_1(\trace)$ is a trace
in $\Prog'$, and $\reorder_1,\ldots,\reorder_k$ does not introduce a {\em
regression} with respect to $T_G$.

\section{Good and Bad Traces}

Our approach to program-repair is through
learning regression preventing constraints from good traces 
and error eliminating constraints from bad traces.
% From good and bad traces, we learn constraints to prevent regressions
% and fix errors respectively.
% We show how constraints are learned from good traces to prevent
% regressions and how fixes are found for existing bad traces.
% In Section~\ref{sec:algo}, we then combine the constraints and fixes to
% obtain a regression-free solution program.

\subsection{Learning from Good Traces}
\label{sec:learn_good}

% Here, the aim is to learn regression-preventing constraints from good
% traces.
% \comment{For now, we assume that each good trace contains only one assertion.
% Multiple assertions can be handled by running the
% learning algorithm once for each assertion.}
%
Given a trace $\trace$ of $\Prog$, a program constraint $\ProgCons$ is a
{\em sound regression preventing constraint} for $\trace$ if every
sequence of program transformations $\reorder_1,\ldots,\reorder_k$,
such that $\Prog\xrightarrow{\reorder_1\ldots
\reorder_k}\Prog'$ and $\Prog'\models\ProgCons$, does not introduce a
regression with respect to $\trace$.
Further, if every $\reorder_1\ldots\reorder_k$, such that
$\Prog\xrightarrow{\reorder_1\ldots \reorder_k}\Prog'$ and
$\Prog'\not\models\ProgCons$, introduces a regression with respect to
$\trace$, then $\ProgCons$ is a {\em complete regression preventing
constraint}.

\begin{example}
  \label{ex:sound_complete_cons}
  Let the program $\Prog$ be ${\tt \{ 1: x := 1; 2: y := 1 \} || \{ A:
  await (y = 1);}$ ${\tt B:assert (x = 1) \}}$. 
  In Figure~\ref{fig:orig_trace}, the constraint $\ProgCons^* = (1 <
  2 \wedge A<B)$ is a sound and complete regression-preventing
  constraint for the trace $1 \to 2 \to A \to B$.
\end{example}

\begin{figure}[b]
\ffigbox
{%
\scriptsize%
  \begin{subfloatrow}[3]
    \ffigbox[\FBwidth]% Width of subfloat
    {%
      \begin{tikzpicture}
        \node[rounded rectangle] (c) {{\tt 1: x:=1}};
        \node[rounded rectangle, below of=c, yshift=10] (d) {{\tt 2: y:=1}};
        \node[rounded rectangle, left of=c, xshift=-20, yshift=-40] (a) {{\tt A: await(y==1)}};
        \node[rounded rectangle, below of=a, yshift=10] (b) {{\tt B: assert(x==1)}};

        \draw[->] (a) to (b);
        \draw[->] (c) to (d);
        \draw[->, blue, dashed] (d) to (a);
        \draw[->, red,thick] (c.south west) to (b.north east);
      \end{tikzpicture}
    }
    {\vspace{-2ex}\subcaption{}\label{fig:orig_trace}}
    \ffigbox[\FBwidth]% Width of subfloat
    {%
      \begin{tikzpicture}
        \node[rounded rectangle] (c1) {{\tt 1: x:=1}};
        \node[rounded rectangle, below of=c1, yshift=10] (d1) {{\tt 2: y:=1}};
        \node[rounded rectangle, left of=c1, xshift=-30, yshift=-20] (a1) {{\tt A: await(y==1)}};
        \node[rounded rectangle, below of=a1, yshift=10] (b1) {{\tt B: assert(x==1)}};
        \node[rounded rectangle, below of=b1, yshift=10] (b2) {{\tt C: a:=1}};
        \node[rounded rectangle, below of=d1, yshift=-15] (e2) {{\tt 3: assume(a==1)}};
        \node[rounded rectangle, below of=e2, yshift=10] (e1) {{\tt 4: x:=0}};

        \draw[->] (a1) to (b1);
        \draw[->] (b1) to (b2);
        \draw[->] (c1) to (d1);
        \draw[->] (e2) to (e1);
        \draw[->, blue, dashed] (d1) to (a1);
        \draw[->, red,thick] (c1.south west) to (b1.north east);
        \draw[->, green, thin] (b1.340) to (e1.west);
        \draw[->, blue, dashed] (b2.east) to (e2.north west);
      \end{tikzpicture}
    }
    {\vspace{-2ex}\subcaption{} \label{fig:from_read_trace}}
    \ffigbox[\FBwidth-0.5cm]% Width of subfloat
    {%
      \begin{tikzpicture}
        \node[rounded rectangle] (c1) {{\tt 1: x:=1}};
        \node[rounded rectangle, below of=c1, yshift=10] (e1) {{\tt 2': y:=2}};
        \node[rounded rectangle, below of=e1, yshift=10] (d1) {{\tt 2: y:=1}};
        \node[rounded rectangle, left of=e1, xshift=-30, yshift=-30] (a1) {{\tt A: await(y>=1)}};
        \node[rounded rectangle, below of=a1, yshift=10] (b1) {{\tt B: assert(x==1)}};

        \draw[->] (a1) to (b1);
        \draw[->, blue, dashed] (d1) to (a1);
        \draw[->, red,thick] (c1.west) to (b1.east);
        \draw[->] (c1.south east) to[bend left=60] (d1.north east);
      \end{tikzpicture}
    }
    {\vspace{-2ex}\subcaption{} \label{fig:write_order_trace}}
    \end{subfloatrow}
}
{
  \vspace{-2ex}
\caption{Sample Good Traces for Regression-preventing constraints}\label{fig:traces}
}
\end{figure}
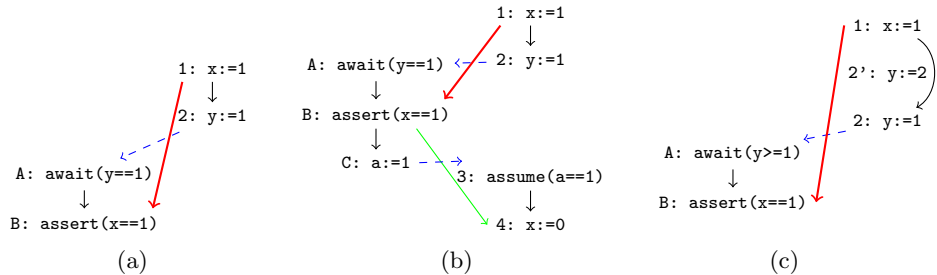

% The following lemma states that the sound and complete regression
% preventing constraint can be computed in time exponential in the length
% of the given trace.
\begin{lemma}
  \label{lem:learning_completeness}
  For a program $\Prog$ and a good trace $\trace$, the sound and
  complete regression-preventing constraint $\ProgCons^*$ is computable
  in exponential time in $\vert \trace \vert$.
\end{lemma}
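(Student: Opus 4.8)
The plan is to establish the claim by a brute-force enumeration whose faithfulness rests on a \emph{locality} observation: whether a sequence $\reorder_1,\ldots,\reorder_k$ with $\Prog\xrightarrow{\reorder_1\ldots\reorder_k}\Prog'$ introduces a regression with respect to $\trace$ depends only on how $\Prog'$ orders and atomically groups the instructions that actually occur in $\trace$. There are at most $\vert\trace\vert$ such instructions, so the relevant part of $\Prog'$ is a truth assignment to the $O(\vert\trace\vert^2)$ atomic predicates $l_i\le l_j$ and $[l_i;l_j]$ ranging over these instructions. First I would show that two programs reachable from $\Prog$ that induce the same assignment induce the same set $\reorder^f_k\circ\ldots\circ\reorder^f_1(\trace)$ of reachable traces, and hence agree on whether they introduce a regression; this makes ``introduces a regression'' a well-defined predicate on the (at most $2^{O(\vert\trace\vert^2)}$) realizable assignments, which is exactly what is needed for a sound and complete $\ProgCons^*$ to exist.

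Next I would give the per-candidate decision procedure. Condition~(b) of the regression definition depends on $\trace$ and $\Prog$ alone, so I would first test whether $\trace$ freely transforms into a bad trace of $\Prog$; using data-independence this is a search over the free-reorderings $f(\trace)$, and if it succeeds then no transformation can introduce a regression and $\ProgCons^*=\mathit{true}$. Otherwise, for each realizable assignment (candidate $\Prog'$) I would enumerate the traces $\trace'\in\reorder^f_k\circ\ldots\circ\reorder^f_1(\trace)$, which by the locality step are precisely the interleavings of the instructions of $\trace$ that respect the order and atomic sections recorded by the assignment; there are at most $\vert\trace\vert!$ of them, each of length $\vert\trace\vert$. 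For every such $\trace'$ I would check in polynomial time, via the $\depends$/$\last$ machinery and data-independence, conditions~(a) and~(c): that $\trace'$ is bad in $\Prog'$ and that every $\bluen$ (data-flow-into-conditionals) edge of $G(\trace)$ has its image edge in $G(\trace')$ reading the same variables. The candidate introduces a regression iff some $\trace'$ witnesses both, and is regression-free otherwise.

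Finally I would assemble $\ProgCons^*$ from the set $R$ of regression-free candidates: $\Prog'\models\ProgCons^*$ must hold exactly when the assignment induced by $\Prog'$ lies in $R$. I would express this as a constraint by taking, over the finitely many skeletons in $R$, the appropriate combination of the atomic predicates $l_i\le l_j$ and $[l_i;l_j]$ that define them (a disjunction of conjunctions in general, collapsing to the single conjunction of Example~\ref{ex:sound_complete_cons} when $R$ is conjunctively closed). Soundness and completeness are then immediate from the locality step, since $\Prog'\models\ProgCons^*$ iff $\Prog'$ is regression-free with respect to $\trace$. The running time is the product of the number of candidates, $2^{O(\vert\trace\vert^2)}$, the number of enumerated traces, $\vert\trace\vert!=2^{O(\vert\trace\vert\log\vert\trace\vert)}$, and a polynomial per-trace check, i.e.\ $2^{O(\vert\trace\vert^2)}$, exponential in $\vert\trace\vert$.

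The main obstacle is the locality step: making precise the position correspondence $p(\cdot)$ between $\trace$ and its transforms, and proving that the reachable set $\reorder^f_k\circ\ldots\circ\reorder^f_1(\trace)$ is captured exactly by the order/atomicity assignment on trace instructions (so that the enumeration neither misses a regression-witnessing $\trace'$ nor invents an infeasible one). Tied to this is justifying that the regression-free set is expressible in the constraint language---if disjunction is disallowed one must argue that $R$ is in fact defined by a single conjunction of ordering and atomicity predicates, which I expect to follow from each broken ``protection'' path in $G(\trace)$ forcing an individual order or atomicity predicate to hold.
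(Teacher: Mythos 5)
Your proposal is correct and takes essentially the same route as the paper's proof, which likewise argues by exhaustively applying all (finitely, though exponentially, many) combinations of free and program transformations to $\trace$, checking each resulting program for regressions, and recording the constraints satisfied by the regression-free ones. Your locality/skeleton quotient and the explicit DNF assembly are just a more rigorous packaging of that same enumeration (and your worry about disjunctions is moot, since the paper's own Algorithm~\ref{algo:learn_good} freely uses disjunctions of ordering constraints).
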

Intuitively, the proof relies on an algorithm 
% (which we call $\LearnGoodGen$)
that iteratively applies all possible free
and program transformations in different combinations (there are a
finite, though exponential, number of these) to $\trace$.
It then records the constraints satisfied by programs obtained by
transformations that do not introduce regressions. 

\comment{
While the complexity is exponential, we can
show that this cost is unavoidable.
We do not present the proof here, but only state that
% , but instead refer the reader to the appendix.
it is non-constructive and is based on Shannon's lower bounds on
circuit complexity for boolean functions.
\arsays{This is a bit shady: is there a family of boolean functions that
has poly size circuits/poly size turing machines, but a exponential size
formulae}
\begin{lemma}
  There exist a class of programs $\Prog_n$, and traces $\trace_n$ of
  length $O(n)$ such that the most-general regression-preventing
  constraint is of size $\Theta(\frac{2^n}{n})$.
\end{lemma}
}

% \ttsays{I am not convinced because the variables here are the happens-before relations in the formula. This means we have $O(n^2)$ rather than $O(n)$ variables.}

% \paragraph{Non-locality.}
% Although Lemma~\ref{lem:learning_completeness} give us a correct and
% optimal procedure for learning constraints, the constraints produced are
% usually large and impractical.
The sound and complete constraints are usually large and
impractical to compute.
Instead, we present an algorithm to compute sound
regression-preventing constraints.
%
% {\color{red}
% As explained in Section~\ref{sec:illustrative_example}, the main idea
% behind our constraint learning procedures is the concept of {\em
% protection} of critical reads.
% We say that an order $l_0 < l_1$ is protected by a constraint
% $\ProgCons \wedge \SchedCons$ in a trace $\trace$ of a program $\Prog$
% if: \arsays{formal statement}.
% Intuitively, an order $l_0$ before $l_1$ in a trace is protected by a
% constraint if no sequence of transformations respecting the constraint
% can reverse the order of $l_0$ and $l_1$ in the trace.
%
% However, the notion of protecting the critical reads to assertions is
% not sufficient to compute under- and over-approximations of the
% regression-preventing constraints.
The main issue here is non-locality, i.e., statements that are not close
to the assertion may influence the regression-preventing constraint.

% }
% with both
% over-approximation and under-approximation of regression-preventing
% constraints.

\begin{example}
  \label{ex:non_locality}
  The trace in Figures~\ref{fig:from_read_trace} is a simple extension
  of Figure~\ref{fig:orig_trace}.
  However, the constraint $(1 \leq 2 \wedge A \leq B)$
  (from Example~\ref{ex:sound_complete_cons}) does not prevent
  regressions for Figure~\ref{fig:from_read_trace}.
  An additional constraint $B \leq C \wedge 3 \leq 4$ is needed as
  reordering these statements can lead to the assertion failing by
  reading the value of {\tt x} ``too late'', i.e., from the statement
  {\tt 4} (trace: $1 \to 2 \to A\to C \to 3 \to 4 \to B$).
  
  Figure~\ref{fig:write_order_trace} clarifies our definition of
  regression, which requires that the data-flow edges into
  assumptions and awaits need to be preserved.
  The await can be activated by both {\tt 2} and {\tt 2'}; in the trace
  we analyse it is activated by {\tt 2}.
  Moving {\tt 2'} before {\tt 1} could activate the await ``too early''
  and the assertion would fail (trace: $2'\to A\to B$).
  However, it is not possible to learn this purely with data-flow
  analysis -- for example, if statement {\tt 2'} was {\tt y := -1}, then
  this would not lead to a bad trace.
  Hence, we exclude such cases from our definition of regressions by
  requiring that the await reads {\tt A} reads from the same location.
\end{example}

\comment{
\begin{example}[Problems for over-approximation]
  Consider the modified version of the trace in
  Figure~\ref{fig:over_approx_trace}.
  Here, the constraint $\ProgCons^* = (A < B \wedge 1 < 2)
  \vee \SchedBefore{1}{B}$ is too strong, i.e., there are programs
  that violate the constraint and yet do not introduce regressions.
  The problem here is that the assertion may succeed even if the current
  version of the data-flow into the assertion is not respected -- even
  if the assertion reads the value of {\tt x} too early, i.e., from {\tt
  0'}, the assertion still succeeds.
\end{example}
}

\paragraph{Learning Sound Regression-Preventing Constraints.}
The sound regression-preventing constraint learned by our algorithm for
a trace ensures that the data-flow into an assertion is preserved.
This is achieved through two steps: suppose an assertion at location
$l_a$ reads from a write at location $l_w$. 
First, the constraint ensures that $l_w$ always happens before
$l_a$.
Second, the constraint ensures that no other writes 
interfere with the above read-write relationship.
% We explain the two steps below.

For ensuring happens-before relationships, we use the notion of a {\em
cover}.
Intuitively, given a trace $\trace$ of $\Prog$ where location $l_x$
happens before location $l_y$,
% happens-before relation $x \to y$ is a program constraint $\ProgCons$
we learn a $\ProgCons$ that ensures that if $\Prog' \models \ProgCons$,
then each trace $\trace'$ of $\Prog'$ obtained as free and program
transformations acting on $\trace$ satisfies the happens-before
relationship between $l_x$ and $l_y$.
Formally, given a trace $\trace$ of program $\Prog$, we call a path $x_1
\to x_2 \to \ldots \to x_n$ in the trace analysis graph a {\em cover} of
edge $x\to y$ if $x = x_1 \wedge y = x_n$ and each of $x_i \to x_{i+1}$
is either a intra-thread order edge, or a  data-flow into conditionals
edge, or a non-free order edge.

Given a trace $\trace = \ProgState_0 l_0 \ProgState_1 l_1 \ldots
\ProgState_n$, where statement at position $r$ (i.e., $l_r$) reads a set
of variables (say $\Vars$) written by a statement at position $w$ (i.e.,
$l_w$), the the non-interference edges define a sufficient set of
happens-before relations to ensure that no other statements can
interfere with the read-write pair, i.e., that every other write to
$\Vars$
either happens before $w$ or after $r$.
Formally, we have that $\interfere(w \to r) = \{ r \to w' \mid w' > r
  \wedge \writen(l_{w'}) \cap \writen(l_w) \cap \readn(l_r) \neq
\emptyset \} \cup \{ w' \to w \mid w' < w
  \wedge \writen(l_{w'}) \cap \writen(l_w) \cap \readn(l_r) \neq
\emptyset \}$ where $\readn(l)$ and $\writen(l)$ are the 
variables read and written at location $l$.
If $w = \bot$, we have $\interfere(w \to r) = \{ r \to w' \mid w' > r
  \wedge \writen(l_{w'}) \cap \readn(l_r) \neq
\emptyset \}$.
\begin{algorithm}
  \begin{algorithmic}[1]
    \REQUIRE A good trace $\trace$
    \ENSURE Regression-preventing constraint $\ProgCons$
    \STATE $\ProgCons\gets true; G \gets G(\trace)$
    \FORALL {$e\in\left(\redn(G)\cup \bigcup_{f \in \redn(G)}\interfere(f)\right)$}
    \STATE \textbf{if}~$e$ is not
    covered~\textbf{then}~\textbf{return}~$\bigwedge \{ l_x \leq l_y
    \mid x \to y~\mbox{is a intra-thread order edge} \} $~\label{line:ret_fallback}
      \STATE $\ProgCons' \gets {\tt false}$
      \FORALL { $x_1 \to x_2 \to \ldots \to x_n$ cover of $e$}  \label{line:pick_cover}
      \STATE $\ProgCons' \gets \ProgCons' \vee \bigwedge \{ l_{x_i} \leq
	l_{x_{i+1}} \mid x_i \to x_{i+1}~\mbox{is a intra-thread order
      edge and $x_i \neq \bot$}$
      \item[] $\qquad\qquad l_{x_i}$ and $l_{x_{i+1}}$ are from the same
  execution of a basic block in $\trace$ \}
        % \FORALL {edge $x\to y\in \cover$}
          % \IF {$x\to y$ is black}
            % \STATE $\ProgCons'\gets\ProgCons'\wedge x\leq y$
          % \ENDIF
        % \ENDFOR
      \ENDFOR
      \STATE $\ProgCons \gets \ProgCons \wedge \ProgCons'$
    \ENDFOR
    \RETURN {$\ProgCons$} \label{line:ret_normal}
  \end{algorithmic}
  \caption{Algorithm $\LearnGoodUnder$} 
  \label{algo:learn_good}
\end{algorithm}

% The main idea of Algorithm~\ref{algo:learn_good} is to {\em protect} the
% data-flow into the assertions.
% the learned constraint may be too strong and may eliminate programs that
% are not necessarily regression inducing.
Algorithm~\ref{algo:learn_good} works by ensuring that for each data-flow
into assertions edge $e$, the edge itself is covered and that the
interference edges are covered.
% It maintains a global constraint $\ProgCons$ to be returned.
For each such cover, the set of intra-thread order edges needed
for the covering are conjuncted to obtain a constraint.
We take the disjunction $\ProgCons'$ of the constraints produced by all
covers of one edge and add it to a constraint $\ProgCons$ to
be returned.
If an edge cannot be covered, the algorithm falls back by returning a
constraint that fixes all current intra-thread orders.
The algorithm can be made to run in
polynomial time in $\vert \trace \vert$ using standard dynamic
programming techniques.
% only a polynomial time in the size of the trace.
\begin{theorem}
  \label{lem:learning_soundness}
  Given a trace $\trace$, Algorithm~\ref{algo:learn_good} returns a
  constraint $\ProgCons$ that is a sound regression-preventing
  constraint for $\trace$ and runs in polynomial time in $\vert \trace
  \vert$.
\end{theorem}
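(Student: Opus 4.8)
The plan is to prove soundness through a single structural invariant: whenever $\Prog'\models\ProgCons$ and a transformation sequence carries $\trace$ to some $\trace'$ while leaving the data-flow into conditionals intact (condition~(c) of the regression definition), the data-flow into \emph{assertions} is intact as well. Once the assertion data-flow is preserved, every assertion in $\trace'$ reads from the same writes, along the same data-flow chain, as in $\trace$, and so evaluates identically; since $\trace$ is good, $\trace'$ is good, and condition~(a) of a regression fails. Hence no regression with respect to $\trace$ is possible, which is exactly soundness. I would split along the two return branches of $\LearnGoodUnder$: the normal return, where every edge of $\redn(G)$ and every interference edge is covered, and the fallback return, where some edge is uncoverable and all intra-thread orders are frozen.

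For the normal case, consider any read-from edge $w\to r$ in $\redn(G)$, i.e.\ $l_r$ reads variables $\Vars$ written by $l_w$ somewhere along the data-flow feeding an assertion. I first show the happens-before $p(w)<p(r)$ survives in $\trace'$. The algorithm supplies a cover $x_1\to\cdots\to x_n$ of $w\to r$, and I argue edge-by-edge that each step is order-preserving: an intra-thread edge is kept either by control flow (statements in different basic blocks cannot be reordered) or by the conjunct $l_{x_i}\leq l_{x_{i+1}}$ that $\ProgCons$ contributes and $\Prog'$ satisfies; a data-flow-into-conditionals edge is kept by condition~(c) itself, which forces $l_{p(x_{i+1})}$ to still read from $l_{p(x_i)}$ and hence $p(x_i)<p(x_{i+1})$ (this is precisely why covers may use conditional edges \emph{for free}); and a non-free-order edge is kept because two statements writing different values to a common variable can be swapped neither by a reordering transformation (they are not sequentially equivalent) nor by a free transformation (the swap changes a non-$\errVar$ variable in the suffix, violating equality modulo error-flag). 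Composing the chain gives $p(w)<p(r)$. I then rule out interference: since every edge of $\bigcup_{f\in\redn(G)}\interfere(f)$ is also covered, each write $w'$ that could overwrite $\Vars$ is pinned either after $r$ (via a covered $r\to w'$) or before $w$ (via a covered $w'\to w$); as the good trace $\trace$ has no such write strictly between $w$ and $r$, none can appear between $p(w)$ and $p(r)$ in $\trace'$. Thus $l_{p(r)}$ reads $\Vars$ from $l_{p(w)}$ in $\trace'$; chaining this over all of $\redn(G)$ from the initial state, every value produced along the data-flow into each assertion is preserved, so $\trace'$ is good, with no appeal to condition~(b).

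The fallback return is the only place condition~(b) is needed, and the accompanying value-tracking is where I expect the main obstacle. When $\ProgCons$ freezes every intra-thread order, no reordering transformation can occur among the $\reorder_i$, so any $\trace'\in\reorder^f_k\circ\cdots\circ\reorder^f_1(\trace)$ is reached from $\trace$ by free transformations (and atomic-section constraints) alone; a free transformation admissible under the added atomic sections of $\Prog'$ is a fortiori a free transformation of $\trace$ in $\Prog$, so if $\trace'$ were bad then $\trace$ would already freely transform into a bad trace of $\Prog$, contradicting~(b) and precluding a regression. The delicate point underlying \emph{both} cases is that the non-free-order and ``no new interfering write'' arguments are naturally per-switch statements, yet the written values—and thus which pairs are non-free and which writes are interfering—could in principle drift as earlier reorderings take effect. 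I would discharge this by induction on the switching sequence, carrying the invariant that all data-flow edges processed so far are preserved: preserved data-flow keeps every written value equal to its value in $\trace$, which keeps each non-free pair non-free and each interfering write interfering, closing the induction and legitimizing the edge-local reasoning of the previous paragraph.

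Finally, for the complexity bound, $G(\trace)$ has $O(\vert\trace\vert)$ vertices and $O(\vert\trace\vert^2)$ edges, and $\depends$, $\last$, $\redn$, and every $\interfere$ set are computable in polynomial time. The sole superpolynomial risk is the inner loop over \emph{all} covers of an edge, of which there may be exponentially many. I would replace this explicit enumeration by a dynamic program over $G(\trace)$: processing vertices in a topological-style order, I maintain for each vertex a compact constraint summarizing the disjunction over covers that reach it, so that the per-edge constraint $\ProgCons'$, and hence $\ProgCons$, is produced in time polynomial in $\vert\trace\vert$, giving the stated bound.
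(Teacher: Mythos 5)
Your proposal is correct and takes essentially the same route as the paper's own proof: the same split into the fallback and covered cases, the same edge-by-edge analysis of a cover (intra-thread edges frozen by the learned constraint, data-flow-into-conditionals edges preserved by condition~(c) of the regression definition, non-free edges unswappable because switching them would change variable values), the same use of the $\interfere$ edges to recover the read-from relation, and the same dynamic-programming observation for the polynomial bound. Your induction over the switching sequence plays exactly the role of the paper's choice of the first (minimal-$x$) violated data-flow edge, and you additionally spell out two points the paper leaves implicit, namely the fallback case via condition~(b) and the value-drift concern for non-free and interfering writes.
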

\begin{proof}[Outline]
  % \arsays{There is a "minor" problem here and in definition of
    % preservation of blue edges. The $x$ and $y$ in $x \to y$ do not
    % represent the same edges in $\trace$ and $\trace'$.
    % Introduce a function (say $\trace -- \trace'$) that does the
    % transformation
  % }
  The fallback case (line~\ref{line:ret_fallback}) is trivially sound.
  Let us assume towards contradiction that there is a bad trace
  $\trace' = \ProgState_0'l_0'\ProgState_1'l_1'\ldots\ProgState_n'$ of
  $\Prog' \models \ProgCons$, that is obtained by transformation of 
  $\trace = \ProgState_0l_0\ProgState_1l_1\ldots\ProgState_n$.
  For each $0 \leq i < n$, let $p(i)$ be such that the instruction at
  position $i$ in $\trace$ is at position $p(i)$ in $\trace'$ after the
  sequence of switching transformations taking $\trace$ to $\trace'$.

  If for every data-flow into assertion edge in $x \to y$ in
  $G(\trace)$, we have that $p(x) \to p(y)$ is a corresponding data-flow
  into assertion edge in $G(\trace')$, then it can be easily shown that
  $\trace'$ is also good (each corresponding edge in $\trace'$ reads the
  same values as in $\trace$).
  Now, suppose $x \to y$ is the first (with minimal $x$) such edge in
  $\trace$ that does not hold in $\trace'$.
  We will show in two steps that $p(x)$ happens before $p(y)$ in $\trace'$,
  and that $p(y)$ reads from $p(x)$ which will lead to a contradiction.

  For the first step, we know that there exists a cover of $x \to y$
  in $\trace$.
  For now, assume there is exactly one cover -- the other case is
  similar.
  For each edge $a \to b$ in this cover, no switching transformation
  can switch the order of $l_a$ and $l_b$:
  \begin{compactitem}
  \item~If $a \to b$ is a data-flow into conditionals edge,
    as $\trace'$ has to preserve all $\bluen$ edges (definition of
    regression), 
    $p(a)$ happens before $p(b)$ in $\trace'$.
  \item~If $a \to b$ is a non-free order edge, no switching
    transformation can reorder $a$ and $b$ as that would change
    variables values (by definition of non-free edges).
  \item~If $a \to b$ is a intra-thread order edge, we have that $\Prog'
    \models \ProgCons$ and $\ProgCons \implies a \leq b$, and hence, no
    switching transformation would change the order of $a$ and $b$.
  \end{compactitem}
  Hence, we have that all the happens before relations given by the
  cover are all preserved by $\trace'$ and hence, $p(a)$ happens before
  $p(a)$ in $\trace'$.
  The fact that $p(y)$ reads from $p(x)$ follows from a similar
  argument with the $\interfere(x \to y)$ edges
  showing that every interfering write either happens before $p(x)$ or
  after $p(y)$. 
  % Hence, we have that the data-flow into assertion edge $p(x) \to p(y)$
  % holds in $\trace'$ leading to a contradiction.
  \qed
\end{proof}

\comment{
\paragraph{Learning Over-approximations.} 
% \arsays{We need to explain how to learn from the prefix -- learn from
% suffix is easy}
% Intuitively, the most-general regression-preventing constraint is
% learned for the segment of interest, and the constraints for the
% prefix and suffix are approximated.
% While computing the approximations is easy, the cost of computing the
% most-general constraint is high, i.e., exponential.
% Therefore, by varying the length of the prefix and suffix versus the
% segment of interest, we can obtain more or less accurate constraints at
% the cost of more computational time. 
We describe an over-approximation algorithm for computing
complete regression-preventing constraints.
The algorithm splits the trace into two parts -- the prefix
$\trace_{pre}$ and the suffix $\trace_{post}$ (line~\ref{line:...}). 
The assertion statement under consideration is assumed to be in the
suffix.

The aim is to find a program constraint that ``enables the prefix''
and ``protects the suffix''.
To enable the prefix, we want to learn a constraint $\Phi_{pre}$ such
that every program that respects $\Phi_{pre}$ contains some free
transformation of $\trace_{pre}$ (line~\ref{line:...}).
The intention is that $\Phi_{pre}$ allows the execution to get to the same
situation that exists at the beginning of $\trace_{post}$.
For example, the program $\Prog$ itself is a candidate for $\Phi_{pre}$.
However, in the algorithm, we find a more general constraint. 

In protecting the suffix, we find a constraint $\Phi_{post}$ such that
every program that violates $\Phi_{post}$ induces regressions into the
suffix of the trace.
However, the major issue was illustrated in Example~\ref{ex:...}. 
To overcome this, we generalize the flow-graph of the suffix
$\trace_{post}$ so that any violation of the flow-graph leads to an
assertion failure.

\begin{algorithm}
  \begin{algorithmic}[1]
    \REQUIRE Partial-program $\PProg$, program $\Prog$, and trace $\trace$
    \ENSURE Regression-preventing constraint $\ProgCons$
    \STATE Pick $\trace_{pre}$, $\trace_{post}$ such that $\trace_{pre}\trace_{post} = \trace$.
    % \STATE Find pre constraint $\trace_{pre}$ (generalization)
    \STATE $DFG \gets DataFlowGraph(\trace)$
    \STATE $\Phi_{pre} \gets \{ v < v' \mid $ $v$ and $v'$ are from
        the same thread and there is a path from $v$ to $v'$ through a 
    node from another thread $ \}$.
    % \STATE Find post constraint $\trace_{post}$ (protection) -- not necessary
    \STATE Generalize the suffix
    \STATE $\Phi_{post} \gets $ protect red edges in generalized suffix
    \RETURN $\Phi_{pre} \implies \Phi_{post}$
  \end{algorithmic}
  \caption{Algorithm $\LearnGood$} 
  \label{algo:learn_good}
\end{algorithm}
% \arsays{Explain this algorithm}
% The algorithm divides the trace into two parts -- the prefix and the
% suffix.
% From the prefix, the algorithm learns a constraint $\Phi_{pre}$ which
% {\em enables} the prefix. 
% Intuitively, if $\Prog$ transforms in $\Prog^*$ under the transformation
% $\reorder_0\reorder_1\ldots$, and $\Phi_{pre}$ holds in $\Prog^*$, there
% exists a prefix of a trace $\trace^*$ in $\Prog^*$ such that a prefix of
% it $\trace^*_{pre}$ transforms into $\trace_{pre}$ under
% $\reorder_0\reorder_1\ldots$ and free transformations.

\begin{theorem}
  \label{lem:learning_soundness}
  Given a trace $\trace$ of a program $\Prog$,
  Algorithm~\ref{algo:learn_good} returns a constraint $\ProgCons$ that
  is a complete regression-preventing constraint for $\trace$.
\end{theorem}

The advantage of this approach is that by changing the length of the
prefix vs. the length of the suffix, we can vary the cost of computing
constraint and the accuracy.
}

% \paragraph{Pragmatic learning.}
% We conclude this section by presenting the method used in our
% implementation to learn constraint. 
% We use a combination of over-approximation and under-approximation,
% i.e., we learn constraints which are neither sound, nor complete. 
% However, our method provides the best performance vs. accuracy for
% regression-prevention.

\subsection{Eliminating Bad Traces}
\label{sec:fix_bad}

% \arsays{Write about previous CAV paper, plus fixing the
% one-sided-communication traces}  

% The goal of this section is to find fixes for bad traces.
Given a bad trace $\trace$ of $\Prog$, a program constraint $\ProgCons$
is a {\em error eliminating constraint} if for all transformations
$\reorder_1,\ldots,\reorder_k$ and $\Prog'$ such that 
$\Prog\xrightarrow{\reorder_1\ldots\reorder_k}\Prog'$ and $\Prog'
\models \ProgCons$, each bad trace
$\trace'$ in $\reorder_k^f\circ\ldots\circ\reorder_1^f(\trace)$ is not a
trace of $\Prog'$.
In~\cite{cav2013}, we presented an algorithm to fix bad traces using
reordering and atomic sections.
The main idea behind the algorithm is as follows.
Given a bad trace $\trace$, we 
\begin{inparaenum}[(a)]
\item first, generalize the trace into a partial order trace; and
\item then, compute a program constraint that violates some essential
  part of the ordering necessary for the bug. 
\end{inparaenum}

More precisely, the procedure builds a trace elimination graph which
contain edges corresponding to the orderings necessary for the
bug to occur, as well as the edges corresponding program constraints.
Fixes are found by finding cycles in this graph -- the conjunction of
the program constraints in a cycle form an error elimination constraint.
Intuitively, the program constraints in the cycle will enforce a
happens-before conflicting with the orderings necessary for the bug.
% We illustrate this algorithm with an example.

\begin{figure}[tb]
\ffigbox
{%
\scriptsize%
  \begin{subfloatrow}[3]
    \ffigbox[\FBwidth]% Width of subfloat
    {%
      \begin{tikzpicture}
        \node[rounded rectangle] (c) {{\tt A: x:=1}};
        \node[rounded rectangle, below of=c, yshift=5] (e) {{\tt B: z:=1}};
        \node[rounded rectangle, below of=e, yshift=5] (d) {{\tt C: y:=1}};
        \node[rounded rectangle, right of=c, xshift=30, yshift=-10] (a) {{\tt 1: await(x=1)}};
        \node[rounded rectangle, below of=a, yshift=5] (b) {{\tt 2: assert(y=1)}};

        \draw[->] (c) edge (a);
        \draw[->] (b) edge (d);
	\draw[->, dotted] (a) edge node[right] {\color{red} $1 \leq 2$} (b);
	\draw[->, dotted] (d) edge[bend left,in=90,out=90] node[left] {\color{red} $C \leq A$} (c);
      \end{tikzpicture}
    }{\vspace{-3ex}\subcaption*{}\label{fig:bad_trace_ex1}}
    \ffigbox[\FBwidth-0.5cm]% Width of subfloat
    {%
      \begin{tikzpicture}
        \node[rounded rectangle] (c) {{\tt A: x:=0}};
        \node[rounded rectangle, below of=c, yshift=-20] (e) {{\tt B: x:=1}};
        \node[rounded rectangle, right of=c, xshift=30, yshift=-15] (a) {{\tt 1: assert(x=1)}};

        \draw[->] (c) to (a);
	\draw[->, draw, dotted] (e) edge[bend left] node[right] {\color{red} $[A, B]$}  (c);
        \draw[->, draw] (a) edge (e);
      \end{tikzpicture}
    } {\vspace{-3ex}\subcaption*{}\label{fig:bad_trace_ex2}}
    \ffigbox[\FBwidth-0.25cm]% Width of subfloat
    {%
      \begin{tikzpicture}
        \node[rounded rectangle] (c) {{\tt A: x:=1}};
        \node[rounded rectangle, below of=c, yshift=0] (e) {{\tt B: y:=1}};
        \node[rounded rectangle, right of=c, xshift=30, yshift=15] (a) {{\tt 1: assert(y=1)}};

        % \draw[->] (c) to (a);
        \draw[->, draw, dotted] (e) edge[bend right] node[right,xshift=2] {\color{red} $B \preceq 1$}  (a);
        \draw[->, draw] (a) edge (e);
        \draw[->, draw] (c) edge (e);
      \end{tikzpicture}
    } {\vspace{-3ex}\subcaption*{}\label{fig:bad_trace_ex3}}
    \end{subfloatrow}
  }
  {\vspace{-7ex}\caption{Eliminating bad traces}\label{fig:bad_trace}}
\end{figure}
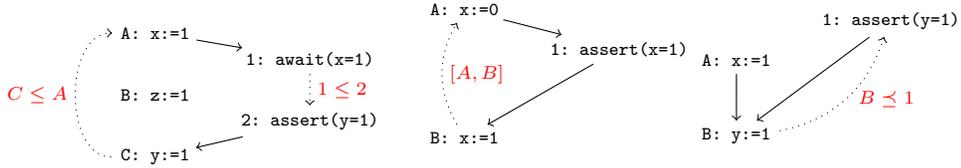
\begin{example}
  % Consider the program ${\tt \{ A: x := 1; B: z := 1; C: y := 1 \} || \{ 1:
  % await(x=1); 2: assert(y=1) \}}$.
  Consider the program in Figure~\ref{fig:bad_trace}(left) and the trace
  elimination graph for the trace $A; B; 1; 2; C$.
  The orderings $A$ happens-before $1$ and $2$
  happens-before $C$ are necessary for the error to happen.
  The cycle $C \to A \to 1 \to 2 \to C$ is the elimination cycle.
  The corresponding error eliminating constraint is $C \leq
  A \wedge 1 \leq 2$, and one possible fix is to move $C$ ahead of $A$.
  For the bad trace $A; 1; B$ in Figure~\ref{fig:bad_trace}(center), the
  elimination cycle is $A \to 1 \to B\to A$ giving us the 
  constraint $[ A; B]$ and an atomic section around $A; B$ as the
  fix.
\end{example}

\paragraph{The $\FixBad$ algorithm.}
The $\FixBad$ algorithm takes as input a program $\Prog$, a constraint
$\ProgCons$ and a bad trace $\trace$.
It outputs a program constraint $\ProgCons'$,
sequence of program transformations $\reorder_1,\ldots,\reorder_k$, and
a new program $\Prog'$, such that
$\Prog\xrightarrow{\reorder_1\ldots\reorder_k}\Prog'$.
The algorithm guarantees that
\begin{inparaenum}[(a)]
 \item~$\ProgCons'$ is an error eliminating constraint;
 \item $\Prog' \models \ProgCons \wedge \Prog' \models \ProgCons'$; and
 \item if there is no preemption-free trace $\trace'$ of $\Prog$ such
   that $\trace$ freely transforms to $\trace'$ (i.e., $\trace' \in
   f(\trace)$), then none of the transformations $\reorder \in
   \{\reorder_1,\ldots,\reorder_k\}$ acts across preemption-points.
\end{inparaenum}
The fact that $\reorder_1\ldots\reorder_k$ and $\Prog'$ can be chosen to
satisfy (c)~is a consequence of the algorithm described
in~\cite{cav2013}.

\noindent{\bf Fixes using wait/notify statements.}
Some programs cannot be fixed by statement reordering or atomic section
insertion.
These programs are in general outside our definition of the program 
repair problem as they have bad sequential traces.
However, they can be fixed by the insertion of wait/notify
statements.
One such example is depicted in Figure~\ref{fig:bad_trace}(right) where the
trace $1; A; B$ causes an assertion failure.
A possible fix is to add a {\tt wait} statement before $1$ and a
corresponding {\tt notify} statement after $B$.
The algorithm $\FixBad$ can be modified to insert such wait-notify
statements by also considering constraints of the
form $X \preceq Y$ to represent that $X$ is scheduled before $Y$ -- the
corresponding program transfomation is to add a wait statement before
$Y$ and a notify statement after $X$.
In Figure~\ref{fig:bad_trace}(right), the edge $B \to 1$ represents such a
constraint $B \preceq 1$ -- the elimination cycle $1 \to B
\to 1$ corresponds to the above described fix.

\section{The Program-Repair Algorithm}
\label{sec:algo}

\begin{algorithm}[tbh]
  \begin{algorithmic}[1]
    \REQUIRE A concurrent program $\Prog$, all sequential traces are good
    \ENSURE Program $\Prog^*$ such that $\Prog^*$ has no bad traces
    \STATE $\ProgCons\gets \mathit{true}; T_G \gets \emptyset$
    \WHILE{\TRUE}
    \STATE{$\mathbf{if}~\mathit{Verify}(\Prog) = \TRUE~\mathbf{then~return}~\Prog$} \label{line:verify_prog}
    \STATE Choose $\trace$ from $\Prog$ \hfill (non-deterministic) \label{line:pick_trace}
    \IF{$\trace$ is non-erroneous}
    \STATE $\ProgCons \gets \ProgCons \wedge \LearnGood(\trace); T_G
    \gets T_G\cup \{\trace\}$
    \ELSE
    \STATE $([\reorder_1,\ldots,\reorder_k],\Prog,\ProgCons') \gets
    \FixBad(\Prog, \PProg, \ProgCons, \trace); \quad \ProgCons \gets \ProgCons \wedge \ProgCons'$\label{line:fix_bad}
    \STATE
    $T_G\gets \bigcup_{\trace_g\in T_G} \{\trace'_g|\trace'_g\in\reorder_k\circ\ldots\circ\reorder_1(\trace^g)\wedge \trace'_g\in\Prog\}$ \label{line:adjust}
    \ENDIF
    \ENDWHILE
  \end{algorithmic}
  \caption{Program-Repair Algorithm for Concurrency}
  \label{algo:main}
\end{algorithm}

Algorithm~\ref{algo:main} is a program-repair procedure to fix
concurrency bugs while avoiding regressions.
The algorithm maintains the current program $\Prog$,
and a constraint $\ProgCons$ that restricts possible reorderings.
In each iteration, the algorithm tests if $\Prog$ is correct and if so
returns $\Prog$.
If not it picks a trace $\trace$ in $\Prog$
(line~\ref{line:pick_trace}).
If the trace is good it learns the regression-preventing constraint
$\ProgCons$ for $\trace$ and the trace $\trace$ is added to the set of
good traces $T_G$ ($T_G$ is required only for the correctness proof).
If $\trace$ is bad it calls $\FixBad$ to generate a new program that
excludes $\trace$ while respecting $\ProgCons$, and $\ProgCons$ is
strengthened by conjunction with the error elimination constraint
$\ProgCons'$ produced by $\FixBad$.
% The set $T_G$ is adjusted to the changed $\Prog$ (line~\ref{line:adjust}).
The algorithm terminates with a valid solution for all choices
of $\Prog'$ in line~\ref{line:fix_bad} as the constraint $\ProgCons$ is
strengthened in each $\FixBad$ iteration. 
Eventually, the strongest program-constraint will restrict the possible
program $\Prog'$ to one with large enough atomic sections such that it
will have only preemption-free or sequential traces.
% The following theorems state the correctness properties of the
% algorithm.

\begin{theorem}[Soundness]
  \label{lem:algo_soundness}
  Given a program $\Prog$, Algorithm~\ref{algo:main} returns a program
  $\Prog'$ with no bad traces that preserves the sequential and correct
  preemption-free behaviour of $\Prog$.
  Further, each iteration of the {\bf while} loop where a bad trace
  $\trace$ is chosen performs a regression-free error fix with respect
  to the good traces $T_G$.
\end{theorem}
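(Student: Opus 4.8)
The plan is to argue by induction over the iterations of the \textbf{while} loop of Algorithm~\ref{algo:main}, carrying invariants from which all three conclusions fall out. Writing $\Prog_0$ for the input program and $\Prog,\ProgCons,T_G$ for the current variable values, I would maintain: (i) $\Prog\models\ProgCons$; (ii) $\ProgCons$ is an error-eliminating constraint for every bad trace processed so far, so that none of those traces is a trace of $\Prog$; (iii) every $\trace_g\in T_G$ is a good trace of the current $\Prog$, equal to the image of some originally-learned good trace under the cumulative reordering sequence applied so far; and (iv) for each $\trace_g\in T_G$, the constraint computed by $\LearnGood$ for its original version---which by Theorem~\ref{lem:learning_soundness} is a sound regression-preventing constraint---is a conjunct of $\ProgCons$. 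Preservation of (i)--(iv) is checked branch-by-branch: the good-trace branch only conjoins $\LearnGood(\trace)$ into $\ProgCons$ and inserts $\trace$ into $T_G$; the bad-trace branch calls $\FixBad$, which by its specification returns $\Prog'\models\ProgCons\wedge\ProgCons'$ with $\ProgCons'$ error-eliminating, after which line~\ref{line:adjust} re-maps $T_G$ through the new reorderings and discards traces that are no longer executions of $\Prog'$, re-establishing (ii)--(iii).

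Two of the three conclusions are then comparatively direct. No bad traces: the sole exit is line~\ref{line:verify_prog}, guarded by $\mathit{Verify}(\Prog)$ returning true. Preservation of sequential behaviour (clause~(c) of the repair problem): each reordering transformation preserves sequential semantics by its defining side-conditions ($l_1,l_2$ in one basic block and $l_1;l_2$ equivalent to $l_2;l_1$), while atomic-section insertions are invisible to sequential traces because such traces never context-switch inside a thread; hence the cumulative sequence sends each complete sequential trace of $\Prog_0$ to a complete sequential trace of $\Prog'$ with the same result, which is good since all sequential traces of $\Prog_0$ are assumed good. Preservation of correct preemption-free behaviour (clause~(d)) I would handle by additionally maintaining the invariant ``all complete preemption-free traces of $\Prog$ are good'' in the regime where it holds for $\Prog_0$: under this invariant the bad trace chosen at any iteration has no preemption-free free-transform, so the hypothesis of $\FixBad$'s guarantee~(c) is met, no transformation acts across preemption-points, preemption-free traces are mapped to preemption-free traces with the same result, and the invariant is re-established.

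The core of the theorem is the per-iteration regression-freedom. At a bad-trace iteration, $\FixBad$ already supplies the error-eliminating half, so it remains to show that the reorderings $\reorder_1,\ldots,\reorder_k$ taking $\Prog$ to $\Prog'$ introduce no regression with respect to any $\trace_g\in T_G$; I would reduce this to Theorem~\ref{lem:learning_soundness} by a composition step. By invariant~(iii), applying $\reorder_1,\ldots,\reorder_k$ to the current $\trace_g$ is subsumed by applying the \emph{entire} cumulative sequence---from the program where $\trace_g$ was first learned up to $\Prog'$---to the original version $\trace_g^{\mathrm{orig}}$; by invariant~(iv) and the fact that $\ProgCons$ is only ever strengthened, $\Prog'\models\LearnGood(\trace_g^{\mathrm{orig}})$, so by soundness that cumulative sequence introduces no regression with respect to $\trace_g^{\mathrm{orig}}$. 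A per-step regression with respect to the current $\trace_g$ would then lift, via this containment, to a regression of the cumulative sequence with respect to $\trace_g^{\mathrm{orig}}$, a contradiction.

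The step I expect to be the main obstacle is making this lifting rigorous, because the definition of regression carries two side-conditions that must be threaded through the composition: condition~(b), that the good trace does not freely transform into a bad trace, and condition~(c), that data-flow into conditionals is preserved between the good trace and the witnessing bad trace. To match~(c) I would strengthen the induction so that each intermediate reordering is itself shown regression-free and hence preserves the data-flow-into-conditionals edges of the good traces it carries, letting the edge-preservations compose; to match~(b) I would invoke invariant~(ii) together with the way free transformations interact with the error-flag convention, which is precisely the delicate point (it also underlies the preemption-free argument above and should be isolated as a separate lemma). Termination, finally, follows the informal lattice argument: each $\FixBad$ strictly strengthens $\ProgCons$ within the finite lattice of atomicity and ordering constraints over the finitely many location pairs, so only finitely many fixes occur, and in the limit the maximal atomicity constraint admits only preemption-free or sequential traces---all good---so $\mathit{Verify}$ succeeds and the loop returns.
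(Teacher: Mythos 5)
Your proposal is correct and follows essentially the same decomposition as the paper's own (very terse) proof outline: absence of bad traces from the exit guard at line~\ref{line:verify_prog}; preservation of sequential behaviour because only reorderings and atomic sections (never wait/notify) are applied; preservation of correct preemption-free behaviour from guarantee~(c) of $\FixBad$; and per-iteration regression-freedom from guarantee~(b) of $\FixBad$ (namely $\Prog'\models\ProgCons$) combined with the soundness of $\LearnGood$ (Theorem~\ref{lem:learning_soundness}). Where you go beyond the paper is the composition/lifting step: the paper simply states that regression-freedom ``follows directly from the properties of $\FixBad$'', silently identifying the constraint learned for the \emph{original} version of a good trace (in the program as it was at learning time) with a regression-preventing constraint for its current re-mapped version in the current program; your invariants (iii)--(iv) and the lifting argument make that identification explicit, and you are right that threading conditions~(b) and~(c) of the regression definition through the composition is the genuinely delicate point that the paper never discharges. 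One caution: your invariant~(iii), that every trace in $T_G$ remains a \emph{good} trace of the current program, does not follow from regression-freedom alone, since a re-mapped trace could in principle turn bad while violating condition~(c) of the regression definition and hence not count as a regression; this gap is inherited from the paper's definitions rather than introduced by you, but it should not be presented as an established invariant. Finally, your closing termination argument belongs to the separate Fair Termination theorem, not to this statement.
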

\remove{
\begin{proof}[outline]
  We need to show there are no bad traces in $\Prog'$ and all sequential traces remain.
  The first follows because the algorithm only terminates if $\Prog'$ is correct and
  the second follows because we do not insert wait-notify statements. Additionally we need to show that 
  if all preemption-free traces are good they are preserved. This is guaranteed by $\FixBad$
  as it does not do transformations across preemption-points if the trace is not preemption-free.
  
  We further need to show that no regression is introduced in each iteration and that the
  bad trace disappears. These follow directly from the properties of $\FixBad$.
\end{proof}
}
The extension of the $\FixBad$ algorithm to wait/notify
fixes in Algorithm~\ref{algo:main} may lead to $\Prog'$ not preserving
the good preemption-free and sequential behaviours of $\Prog$.
However, in this case, the input $\Prog$ violates the pre-conditions 
of the algorithm.

\begin{theorem}[Fair Termination]
  Assuming that a bad trace will eventually be chosen in 
  line~\ref{line:pick_trace} if one exists in $\Prog$, Algorithm~\ref{algo:main}
  terminates for any instantiation of $\FixBad$.
\end{theorem}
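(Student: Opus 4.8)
The plan is to exhibit a well-founded measure that strictly decreases on every iteration in which a bad trace is processed, and then use the fairness hypothesis to argue that such iterations occur often enough. The natural measure is the current constraint $\ProgCons$ itself, viewed as an element of the lattice of conjunctions of atomicity constraints $[l_i;l_j]$ and ordering constraints $l_i \leq l_j$. Since the input program has only finitely many program locations, there are only finitely many such atoms, so this lattice is finite and $\ProgCons$ can be strictly strengthened only finitely often. Throughout I would maintain the loop invariant that $\Prog \models \ProgCons$: it holds initially (with $\ProgCons = \mathit{true}$); it is preserved by the good-trace branch because $\LearnGood(\trace)$ only records orderings already realised by the trace $\trace$ of $\Prog$; and it is preserved by the bad-trace branch by property~(b) of $\FixBad$, which guarantees $\Prog' \models \ProgCons \wedge \ProgCons'$.

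First I would show that each bad-trace iteration either strictly strengthens $\ProgCons$ or terminates the loop. Suppose a bad trace $\trace$ is chosen at line~\ref{line:pick_trace}; then $\trace$ is a bad trace of the current $\Prog$, and $\FixBad$ returns an error-eliminating constraint $\ProgCons'$ for $\trace$. If conjoining $\ProgCons'$ did not strengthen $\ProgCons$, i.e. $\ProgCons \Rightarrow \ProgCons'$, then the invariant $\Prog \models \ProgCons$ would give $\Prog \models \ProgCons'$; applying the definition of an error-eliminating constraint to the trivial (empty) transformation $\Prog \to \Prog$ then shows that $\trace$ is not a trace of $\Prog$, contradicting the choice of $\trace$. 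Hence every bad-trace iteration moves $\ProgCons$ strictly up the finite lattice.

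Finally I would combine these facts with fairness. Assume for contradiction that the loop runs forever. Then the test at line~\ref{line:verify_prog} never succeeds, so at every iteration the current $\Prog$ has a bad trace; by the fairness hypothesis a bad trace is chosen infinitely often, so infinitely many bad-trace iterations occur. By the previous paragraph each of these strictly strengthens $\ProgCons$, contradicting finiteness of the lattice. Therefore the loop terminates. Note that good-trace iterations leave $\Prog$ unchanged and only weakly strengthen $\ProgCons$, so they cannot by themselves prevent termination once fairness forces the bad traces to be handled.

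I expect the main obstacle to be the progress step: pinning down, from the formal definition of an error-eliminating constraint together with the invariant $\Prog \models \ProgCons$, that the returned $\ProgCons'$ is genuinely new. This requires that the free-transformation closure in that definition contains $\trace$ itself, so that ``$\trace$ is eliminated'' really contradicts ``$\trace$ is a bad trace of $\Prog$''. A secondary point to verify is that $\FixBad$ always returns some output respecting $\ProgCons$ (per its specification, possibly by enlarging atomic sections or inserting wait/notify), so that the loop never blocks before the measure argument applies.
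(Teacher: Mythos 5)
Your proof is correct and takes essentially the same route as the paper's own proof outline: fairness forces infinitely many bad-trace iterations if the loop never exits, each $\FixBad$ call strictly strengthens $\ProgCons$ (in the paper's words, ``either increases an atomic section or adds a constraint''), and only finitely many inequivalent constraints exist over the program's finitely many locations, while the existence of a correct fully-atomic program (all sequential traces being good) guarantees the process can always make progress. Your explicit progress argument---the invariant $\Prog \models \ProgCons$ combined with the error-eliminating property applied to the empty transformation sequence---is a sharpening of the step the paper asserts without proof.
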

\remove{
\begin{proof}[outline]
 {\new Fairness guarantees that eventually a bad trace will be chosen if there exists one.
 Every call to $\FixBad$ either increases an atomic section or adds a constraint. The number of both is finite. The correct program always exists because every thread can be wrapped into an atomic section and all sequential traces are good.
}\end{proof}
}

\newcommand\choices{\mathcal{C}}
\newcommand\choice{\mathbf{c}}
\newcommand\expts{\mathcal{E}}
\newcommand\inp{\mathbf{i}}
\mypara{A Generic Program-Repair Algorithm.}
We now explain how our program-repair algorithm relates to generic
synthesis procedures based on {\em counter-example guided inductive
synthesis} (CEGIS)~\cite{asplos06}.
In the CEGIS approach, the input is a {\em partial-program} $\PProg$, i.e., a
non-deterministic program and the goal is to specialize $\PProg$ to a
program $\Prog$ so that all behaviours of $\Prog$ satisfy a
specification.
In our case, the partial-program would non-deterministically choose
between various reorderings and atomics sections.
Let $\choices$ be the set of choices (e.g., statement
orderings) available in $\PProg$.
For a given $\choice \in \choices$, let $\mathbb{P}(\PProg, \choice,
\inp)$ be the predicate that program obtained by
specializing $\PProg$ with $\choice$ behaves correctly on the input
$\inp$.

The CEGIS algorithm maintains a set $\expts$ of inputs 
called experiments.
In each iteration, it finds $\choice^* \in \choices$ such
that the $\forall \inp \in \expts: \mathbb{P}(\PProg, \choice^*, \inp)$.
Then, it attempts to find an input $\inp^*$ such that
$\mathbb{P}(\choice^*, \inp^*)$ does not hold.
If there is no such input, then $\choice^*$ is the correct
specialization.
Otherwise, $\inp^*$ is added to $\expts$.
This procedure is illustrated in Figure~\ref{fig:cegis}(left).
Alternatively, CEGIS can be rewritten in terms of
constraints on $\choices$.
For each input $\inp$, we associate the constraint $\phi_\inp$
where $\phi_\inp(\choice) \Leftrightarrow \mathbb{P}(\PProg, \choice,
\inp)$.
Now, instead of $\expts$, the algorithm maintains the constraint $\Phi =
\bigwedge_{\inp \in \expts} \phi_\inp$.
Every iteration, the algorithm picks a $\choice$ such that $\choice
\models \Phi$; tries to find an input $\inp^*$ such that
$\neg \mathbb{P}(\PProg, \choice, \inp)$ holds, and then strengthens
$\Phi$ by $\phi_{\inp^*}$.

\begin{figure}
  \begin{minipage}{0.3\textwidth}
    \begin{tikzpicture}
      \node [draw, rectangle] (a) {
	\tabular{c} $\exists? \choice^* : \bigwedge_{\inp \in E} $ \\
	$ \mathbb{P}(\PProg, \choice^*, \inp) $
	\endtabular
      };
      \node [draw, below of=a, yshift=-5, rectangle] (b) {
	\tabular{c}
	$\exists? \inp^*$ s.t.   \\
	$\neg \mathbb{P}(\PProg, \choice^*, \inp^*)$
	\endtabular
      };

      \draw[->] (a) edge[bend left, out=90, in=90] (b);
      \draw[->] (b) edge[bend left, out=90, in=90] node[left] {
	\tabular{c} 
	$E = E \cup $\\
	$\{ \inp^* \}$
	\endtabular
      } (a);
    \end{tikzpicture}
  \end{minipage}
  \hfill
  \begin{minipage}{0.6\textwidth}
    \begin{tikzpicture}
      \node [draw, rectangle] (a) {
	\tabular{c} $\exists? \choice^* : \choice^* \models \Phi$ 
	% $ \mathbb{P}(\PProg, \choice^*, \inp) $
	\endtabular
      };
      \node [draw, below of=a, yshift=-5, xshift=-37, rectangle] (b) {
	\tabular{c}
	$\exists? \inp^*$ s.t.   \\
	$\neg \mathbb{P}(\PProg, \choice^*, \inp^*)$
	\endtabular
      };
      \node [draw, below of=a, yshift=-5, xshift=37, rectangle] (c) {
	\tabular{c}
	$\exists? \inp^*$ s.t.   \\
	$\mathbb{P}(\PProg, \choice^*, \inp^*)$
	\endtabular
      };

      \draw[->] (a) edge (b);
      \draw[->] (a) edge (c);
      \draw[->] (b) edge[out=135,in=270,bend left] node[left] {
	\tabular{c}
	$\Phi = \Phi \wedge$ \\
	$\FixBad(\inp^*)$ 
	\endtabular
      } (a.west);
      \draw[->] (c) edge[out=135,in=180,bend right] node[right] {
	\tabular{c}
	$\Phi = \Phi \wedge $ \\
	$\LearnGood(\inp^*)$ 
	\endtabular
      } (a.east);
    \end{tikzpicture}
  \end{minipage}
  \vspace{-2ex}
  \caption{The CEGIS and PACES spectrum}
  \label{fig:cegis}
\end{figure}
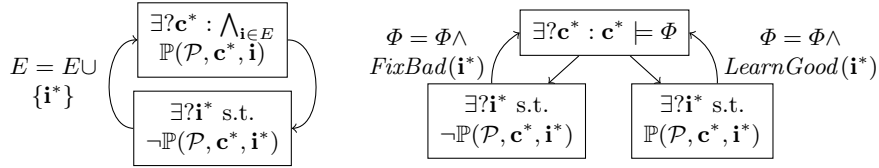

This procedure is exactly the else branch (i.e., $\FixBad$ procedure) of
an iteration in Algorithm~\ref{algo:main} where $\inp^*$
and $\phi_{\inp^*}$ correspond to $\trace$ and $\FixBad(\trace)$.
Intuitively, the initial variable values in $\trace$ and the scheduler
choices are the inputs to our concurrent programs.
This suggests that the then branch in Algorithm~\ref{algo:main} could
also be incorporated into the standard CEGIS approach.
This extension (dubbed PACES for {\em Positive and Counter-Examples 
in Synthesis}) to the CEGIS approach is shown in
Figure~\ref{fig:cegis}(right).
Here, the algorithm in each iteration may choose to find an input for
which the program is correct and use the constraints arising
from it.
We discuss the advantages and disadvantages of this approach below.
% that arise from the
% use of both examples and counter-example inputs, as well as the use of
% constraints instead of explicit inputs.

\vspace{-1ex}
\paragraph{Constraints vs. Inputs.}
A major advantage of using constraints instead of sample inputs is the
possibility of using over- and under-approximations.
As seen in Section~\ref{sec:learn_good}, it is sometimes easier
to work with approximations of constraints due to simplicity
of representation at the cost of potentially missing good solutions.
% However, the disadvantage is that under-approximating constraints
% introduces the possibility of missing potential solutions, while
% over-approximating the constraints may lead to either introducing
% regressions or re-examining some inputs.
% \arsays{Constraints can be approximated -- inputs can't -- probably the
% main point of constraints}
%
Another advantage is that the sample inputs may have no simple
representations in some domains. 
The scheduler decisions are one such example -- the
scheduler choices for one program are hard to translate into the
scheduler choices for another.
For example, the original CEGIS for concurrency work~\cite{SLJB08} uses 
ad-hoc trace projection to translate the scheduler choices
between programs.
% On the other hand, there are many domains where the constraints are 
% harder to express or are more complex than explicit sample inputs -- for
% example, with finite combinatorial programs.
% Hence, the choice of using constraints vs. sample inputs depends heavily
% on the domain.
% \arsays{one example here}
% \arsays{However, constraints are harder to maintain. Not all domains
% have easily expressible constraints}

\vspace{-1ex}
\paragraph{Positive-examples and Counter-examples vs. Counter-examples.}
% The use of both examples and counter-examples in PACES is advantageous
% in many cases.
In standard program-repair tasks, although the faulty
program and the search space $\choices$ may be large, the
solution program is usually ``near'' the original program, i.e., the fix
is small.
Further, we do not want to change the given program unnecessarily. 
In this case, the use of positive examples and over-approximations of
learned constraints can be used to narrow down the search space quickly.
Another possible advantage comes in the case where the search space
for synthesis is structured (for example, in modular synthesis).
In this case, we can use the correct behaviour displayed by a candidate
solution to fix parts of the search space.
% \arsays{complete this
% -- Search space is structured / Modular synthesis (search space is split
% into multiple almost-independent parts). From the good behaviour in one
% part of the search space, fix parts of the structure
% }

\section{Implementation and Experiments}
\label{sec:impl}

We implemented Algorithm~\ref{algo:main} in our tool ConRepair.
The tool consists of 3300 lines of Scala code and is available at
https://github.com/thorstent/ConRepair.
Model checker CBMC~\cite{cbmc} is used for generating both good and bad
traces, and on an average more than 95\% of the total execution time is
spent in CBMC.
Model checking is far from optimal to obtain good traces, and we expect
that techniques from~\cite{Sen:2008:RDR:1375581.1375584} can be used to
generate good traces much faster.
Our tool can operate in two modes: In ``mixed'' mode it first
analyses good traces and then proceeds to fixing the program. 
The baseline ``badOnly'' mode skips the analysis of good traces
(corresponds to the algorithm in~\cite{cav2013}).

In practice the analysis of bad traces usually generates a large number
of potential reorderings that could fix the bug.
Our original algorithm from~\cite{cav2013} (badOnly {\tt ce1}) prefers
reorderings over atomic sections, but in examples where an atomic
section is the only fix, this algorithm has poor performance.
To address this we implemented a heuristic ({\tt ce2}) that places
atomic sections before having tried all possible reorderings, but this
can result in solutions having unnecessary atomic sections.

The fall back case in Algorithm~\ref{algo:learn_good} severely limits
further fixes -- it forces further fixes involving the same instructions
to be atomic sections.
Hence, in our implementation, we omit this step and prefer an unsound
algorithm (i.e., not necessarily regression-free) that can fix more
programs with reorderings.
While the implemented algorithm is unsound, our experiments show that
even without the fallback, in our examples, there is no regression
except for one artificial example ({\tt ex-regr.c}) constructed
precisely for that purpose.

\vspace{-1ex}
\paragraph{Benchmarks.}
We evaluate our tool on a set of examples that model real bugs found and 
fixed in Linux device drivers by their developers.  
To this end, we explored a history of bug fixes in the drivers subtree
of the Linux kernel and identified concurrency bugs.
% We used bug descriptions as well as code inspection to identify 
% concurrency bugs.
We further focused our attention on a subset of 
particularly subtle bugs involving more than two racing threads 
and/or a mix of different synchronization mechanisms, e.g., lock-based 
and lock-free synchronization.  
Approximately 20\% of concurrency bugs that we considered satisfy this
criterion.
Such bugs are particularly tricky to fix either manually or
automatically, as new races or deadlocks can be easily introduced while
eliminating them.
% (deadlocks are modelled using assertions because CBMC cannot detect
% deadlocks by itself). 
Hence, these bugs are most likely to benefit from good trace analysis.

\begin{wraptable}{r}{0.65\textwidth}
  \small
  \begin{tabular}{l|l|l|l|l}
     % & &  & \multicolumn{2}{c}{badOnly}\\
    File & LOC & mixed & {\scriptsize badOnly ce1} & {\scriptsize badOnly ce2} \\
    \hline
    {\tt ex1.c} & 60 & 1 & 2 & 2  \\
    {\tt ex2.c} & 37 & 2 & 5 & 6 \\
    {\tt ex3.c} & 35 & 1 & 2 & 2 \\
    {\tt ex4.c} & 60 & 1 & 2 & 2 \\
    {\tt ex5.c} & 43 & 1 & 8 & 3 \\
    {\tt ex-regr.c} & 30 & 2 & 2 & 2\\
    {\tt paper1.c} & 28 & 1 & 3 & 3\footnote{{\tt ce2} heuristic placed 
  unnecessary atomic section\vspace{-3ex}} \\
    {\tt dv1394.c} & 81 & 1 (13+4s) & 51 (60s) & 5\mpfootnotemark[1] (9s) \\
    {\tt iwl3945.c} & 66 & 1(3+2s) & 2(2s) & 2(2s) \\
    {\tt lc-rc.c} & 40 & 10 (2+7s) & 179 (122s) & 203 (134s) \\
    % \hline
    {\tt rtl8169.c} & 405 & 7 (10+45m) & \textgreater100 (\textgreater6h) & 8 (54m) \\
    {\tt usb-serial.c} & 410 & 4 (56+20m) & 6 (38m) & 6 (38m)
  \end{tabular}
  \vspace{-2ex}
  \caption{Results in iterations and time needed.}
  \label{tbl:results}
\end{wraptable}
Table~\ref{tbl:results} shows our experimental results: the iterations
and the wall-clock time needed to find a valid fix for our mixed
algorithm and the two heuristics of the badOnly algorithm. For the mixed
algorithm the time is split into the time needed to generate and
analyse good traces (first number) and the time needed for the fixing
afterwards.

\vspace{-1ex}
\paragraph{Detailed analysis.}
The artificial examples {\tt ex1.c} to {\tt ex5.c} are used for testing
and take only a few seconds;
example {\tt paper1.c} is the one in Figure~\ref{fig:example}.
%
% {\new
Example {\tt ex-regr.c} was constructed to show unsoundness of the
implementation.
% In this example, 
% the assertion contains a disjunction and
% The data-flow into assertions analysis generates a number of
% edges that cannot be covered.
% over-approximation of real dependencies
% that cannot be covered.
% date-flow edges into assertions that cannot be covered.
% As our implementation does not fallback this example allows for a
% regression.
%
% }
%
Example {\tt usb-serial.c} models the USB-to-serial adapter driver.
Here, from the good traces the tool learns that two statements should
not be reordered as it will trigger another bug.
This prompts them to be reordered above a third statement together,
while the badOnly analysis would first move one, find a new bug, and
then fix that by moving the other statement.
Thus, the good trace analysis saves us two rounds of bug fixing and
reduces bug fixing time by 18 minutes.
% However, this does not offset the high initial cost due to the 
% good traces being generated by a model checker.
% This is because of the
% high number of assertions in the model with just one actual bug and
% due to CBMC taking long to find good traces.

%The examples {\tt dv1394.c} and {\tt rtl8169.c} emphasize the
%trade-off involved in  
%choosing between the new algorithm and  the {\tt ce2} heuristic in
%``badOnly'' mode. 
The  {\tt rtl8169.c} example
models the Realtek 8169 driver containing 5 concurrency bugs.  One of
the reorderings that the tool considers introduces a new bug; further, after
doing the reordering, the atomic section is the only valid fix. 
The good trace analysis discover that the reordering would lead to a
new bug, and thus does the algorithm does not use it.
But, without good traces, the tool uses the faultly reordering and then
{\tt ce1} takes a very long time to search through all possible reorderings
and then discover that an atomic section is required.
The situation is improved when using heuristic {\tt ce2} as it
interrupts the search early.  However, the same heuristic has an adverse 
effect in the {\tt dv1394.c} example: by interrupting  
the search early, it prevents the algorithm from finding a correct 
reordering and inserts an unnecessary atomic section.
The {\tt dv1394.c} example also benefits from good traces in a different
way than the other examples.
Instead of preventing regressions, they are used
to obtain {\em hints} as to what reorderings would provide coverage
for a specific data-flow into assertion edge. 
Then, if a bad trace is encountered and can be fixed by the hinted
reordering, the hinted reordering is preferred over all other possible
ones.
% The hints are then used if a bad trace is encountered to prefer the
% hinted reorderings if they are among the possible solutions.
Without hints the {\tt dv1394.c} example
would require 5 iterations. Though hints are not part of our theory they
are a simple and logical extension.

%In example {\tt lc-rc.c}, the heuristic {\tt ce2} does not help as
%the bug cannot be fixed using atomic sections.
Example {\tt lc-rc.c} models a bug in an ultra-wide 
band driver that requires two reorderings to fix. Though there is initially 
no deadlock, one may easily be introduced when reordering statements.
Here, the good-trace analysis identifies a dependency 
between two {\tt await} statements and learns not to reorder statements
to prevent a deadlock. 
Without good traces, a large number of candidate solutions that cause
a regression are generated.  

%\noindent{\bf Conclusion.} 
\vspace{-1ex}
\section{Conclusion}
We have developed a regression-free algorithm for fixing errors that
are due to concurrent execution of the program. The contributions
include the problem setup (the definitions of program repair for
concurrency, and the regression-free algorithm), the PACES approach
that extends the CEGIS loop with learning from positive examples, and
the analysis of positive examples using data flow to assertions and to
synchronization constructs.   

There are several possible directions for future work. 
One interesting direction is to examine the possibility of
extending the definition of regressions (see
Remark~\ref{rem:regressions} and Example~\ref{ex:non_locality}) -- this
requires going beyond data-flow analysis for learning
regression-preventing constraints.
Another possible extension is to remove the assumption that the
errors are data-independent.
%A third direction is to develop a sound and complete
%algorithm for learning regression-preventing 
%constraints. 
A more pragmatic goal would be to develop a practical version of the
tool for device-driver synthesis starting from the current prototype.
% These include investigating learning sound and complete
% regression preventing constraints, improving the trace generalization
% algorithm, and developing a practical version of the tool starting 
% from the current prototype version. 
%Future work. 

\noindent {\bf Acknowledgements.} We would like to thank Daniel
Kroening and Michael Tautschnig for their prompt help with all our
questions about CBMC. 
We would also like to thank Roderick Bloem, Bettina K\"onighofer and Roopsha Samanta for
fruitful discussions 
regarding repair of concurrent programs.

\bibliographystyle{splncs03}
\bibliography{references}

\newpage

\appendix

\section{The iwl3945 driver.}\label{app:driver}

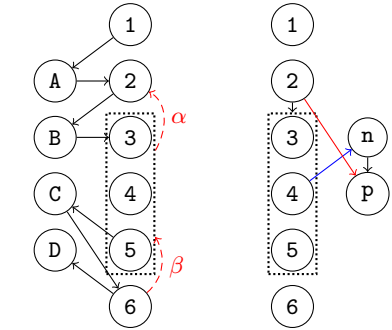
\begin{figure}[tb]
\ffigbox
{%
  \begin{subfloatrow}[1]
    \ffigbox[\FBwidth+0.5cm]% Width of subfloat
    {%
      \begin{minipage}{0.35\textwidth}
      \small{ \tt
config\_thread () \{\\
\phantom{~~}A: lock(rtnl\_lock);\\
\phantom{~~}B: lock(lock);\\
\phantom{~~}C: unlock(lock);\\
\phantom{~~}D: unlock(rtnl\_lock);\\
\}
      }
      \end{minipage}
\quad
      \begin{minipage}{0.4\textwidth}
      \small{ \tt
iwl3945\_bg\_alive\_start\_thread() \{\\
\phantom{~~}1: lock(lock);\\
\phantom{~~}2: restart = 1;\\
\phantom{~~}$\begin{bmatrix*}[l]
  \mbox{3: lock(rtnl\_lock);}\\
  \mbox{4: notify = 1;}\\
  \mbox{5: unlock(rtnl\_lock);}
 \end{bmatrix*}$\\
\phantom{~~}6: unlock(lock);\\
\}
      }
      \end{minipage}
\quad
      \begin{minipage}{0.4\textwidth}
      \small{ \tt
reassoc\_thread () \{\\
\phantom{~~}n: await(notify==1);\\
\phantom{~~}p: assert(restart==1);\\
\}
      }
      \end{minipage}
    }
    {%
      \subcaption{Simplified threads of the iwl3945 driver}\label{fig:iwl3945}%
    }
  \end{subfloatrow}\\
  \begin{subfloatrow}[2]

    \ffigbox[\FBwidth+0.5cm]% Width of subfloat
    {%
    \begin{tikzpicture}[%
    hhilit/.style={draw=black, thick, densely dotted,
    inner xsep=.1em,
    inner ysep=.1em},
    ]
      \tikzstyle{every state}=[draw=black, minimum size=12]

      \node[state]  (1) {{\tt 1}};
      \node[state,below of=1,yshift=0.25cm]  (2) {{\tt 2}};
      \node[state,below of=2,yshift=0.25cm]  (3) {{\tt 3}};
      \node[state,below of=3,yshift=0.25cm]  (4) {{\tt 4}};
      \node[state,below of=4,yshift=0.25cm]  (5) {{\tt 5}};
      \node[state,below of=5,yshift=0.25cm]  (6) {{\tt 6}};
      
      \node[state,left of=2]  (a) {{\tt A}};
      \node[state,below of=a,yshift=0.25cm]  (b) {{\tt B}};
      \node[state,below of=b,yshift=0.25cm]  (c) {{\tt C}};
      \node[state,below of=c,yshift=0.25cm]  (d) {{\tt D}};

      \node[hhilit, fit=(3) (5)] (bo) {};
      
      \path[->]
            (1) edge (a)
            (a) edge (2)
            (2) edge (b)
            (b) edge (3)
            (5) edge (c)
            (c) edge (6)
            (6) edge (d)
            (bo) edge[red,densely dashed,in=330,out=60] node[xshift=0.2cm] {$\alpha$} (2)
            (6) edge[red,densely dashed,in=300,out=40] node[xshift=0.2cm] {$\beta$} (bo)
      ;
      
    \end{tikzpicture}
    }
    {%
      \subcaption{Possible fixes for this trace}\label{fig:bad}%
    }
    
    \ffigbox[\FBwidth+0.5cm]% Width of subfloat
    {%
    \begin{tikzpicture}[%
    hhilit/.style={draw=black, thick, densely dotted,
    inner xsep=.1em,
    inner ysep=.1em},
    ]
      \tikzstyle{every state}=[draw=black, minimum size=12]
      \node[state]  (1) {{\tt 1}};
      \node[state,below of=1,yshift=0.25cm]  (2) {{\tt 2}};
      \node[state,below of=2,yshift=0.25cm]  (3) {{\tt 3}};
      \node[state,below of=3,yshift=0.25cm]  (4) {{\tt 4}};
      \node[state,below of=4,yshift=0.25cm]  (5) {{\tt 5}};
      \node[state,below of=5,yshift=0.25cm]  (6) {{\tt 6}};

      \node[state,right of=3]  (n) {{\tt n}};
      \node[state,below of=n,yshift=0.25cm]  (p) {{\tt p}};
      
      \node[hhilit, fit=(3) (5)] (bo) {};
      
      \path[->]
            (4) edge[blue]  (n)
            (2) edge[red] (p)
            (n) edge (p)
            (2) edge (bo)
      ;

    \end{tikzpicture}
    }
    {%
      \subcaption{Learning from a good trace}\label{fig:good}%
    }
  \end{subfloatrow}
}
{\caption{Model of deadlock bug in the iwl3945 driver}}
\end{figure}

This example models a real concurrency bug found in the 
Linux driver for the Intel wireless adapter 3945.  The model involves three 
kernel threads calling driver entry points shown in Figure~\ref{fig:iwl3945}.

The driver suffers from a classical ABBA deadlock; if the {\tt config\_thread}
locks {\tt rtnl\_lock} then the {\tt iwl3945\_bg\_alive\_start\_thread} locks
{\tt lock} none of the threads can proceed.

{\tt config\_thread} is located outside the driver and cannot be changed
in order to fix the bug. Furthermore instructions {\tt 3} to {\tt 5} need
to stay together because they are located in a separate function in the
original code.
The {\tt iwl3945\_bg\_alive\_start\_thread} is used to restart the device
if it no longer responds properly. After a restart (line {\tt 2}) the {\tt
reassoc\_thread} is notified (line {\tt 4}), which depends on the device
having completed the restart.

In the actual model the deadlock is modelled with an assertion
between lines {\tt A} and {\tt B} that
tests which thread owns which locks and which thread is waiting for
which locks. The assertion fails if there is a deadlock. This construction
is needed because CBMC does not support deadlock detection.

Our old algorithm {\tt ce1} without a learning from good traces phase now
proceeds to finding a bad trace, such as 
$1\to A\to 2\to B\to 3\to 4\to 5\to C\to 6\to D$ displayed
in Fig.~\ref{fig:bad}. Note that after an assertion failure the
trace is completed as if the assertion had succeeded in order to find all
reordering options. This leaves a number of bug-fixes to the algorithm: 
We denote only the two important for the description of our algorithms.
The first is to  move
the block {\tt3-5} in front of {\tt 1} (denoted as $\alpha$)
and the second is to move {\tt6} in front of {\tt3-5} (denoted as $\beta$).
Without further knowledge the algorithm will choose $\alpha$ for no
particular reason.
This does not fix the deadlock, but it furthermore introduces a regression because
now assertion {\tt p} may fail because the notify signal is sent before the restart
is completed.

The good trace analysis prevents a regression with respect to good traces
such as the trace where all threads are run sequentially. The relevant parts of the
result of our analysis are depicted in Fig.~\ref{fig:good}.
The red and blue edge indicate the variable assignments the assertion
and the await read from respectively. In order to protect the red edge
two black edges are needed, from {\tt2} to the instruction block
{\tt3-5} and from {\tt n} to {\tt p}. When the algorithm comes to the phase of fixing 
bugs it will discover the same two possibilities, but $\alpha$ is blocked
by the black edge from {\tt2} to {\tt3-5}. The algorithm will then
choose the correct fix $\beta$.

This is also the fix the developers took in the actual driver.
By taking the notify with its needed {\tt rtnl\_lock} out of the {\tt lock} environment
the deadlock is avoided.

\end{document}